\documentclass[11pt]{article}

\usepackage{siunitx}
\usepackage[T1]{fontenc}
\usepackage{verbatim}
\usepackage{amsmath,amsthm}
\usepackage{xspace}
\usepackage{pifont}
\usepackage{amssymb}
\usepackage{dsfont}
\usepackage[table]{xcolor}
\usepackage{arydshln}
\usepackage{makeidx}
\usepackage{mathrsfs}
\usepackage{xcolor, colortbl}
\usepackage{subcaption}
\usepackage[numbers]{natbib}
\usepackage{enumitem}
\usepackage{booktabs}
\usepackage{afterpage}
\usepackage{mathtools}
\usepackage{graphicx}
\usepackage{float}
\usepackage[export]{adjustbox}
\usepackage{tcolorbox}
\usepackage{authblk}
\usepackage[normalem]{ulem}
\usepackage[colorlinks,citecolor=blue,urlcolor=blue]{hyperref}
\mathtoolsset{showonlyrefs}

\def\eps{\varepsilon}

\def\P{\mathbb{P}}

\newcommand{\R}{{\mathbb R}}

\newcommand{\N}{{\mathbb N}}

\definecolor{ablightgreen}{RGB}{226,246,226}

\newcommand{\bd}{\begin{displaymath}}
\newcommand{\ed}{\end{displaymath}}
\newcommand{\be}{\begin{equation}}
\newcommand{\ee}{\end{equation}}
\newcommand{\bq}{\begin{eqnarray}}
\newcommand{\eq}{\end{eqnarray}}
\newcommand{\bn}{\begin{eqnarray*}}
\newcommand{\en}{\end{eqnarray*}}

\newtheorem{theorem}{Theorem}[section]

\newtheorem{remark}[theorem]{Remark}

\numberwithin{equation}{section}

\usepackage{geometry}
 \newcommand*\samethanks[1][\value{footnote}]{\footnotemark[#1]}

\begin{document}

\title{Optimal Execution with Passive Market Impact}
\author[1]{Alexander Barzykin }
\author[2]{Robert Boyce \thanks{RB and ST are supported by the EPSRC Centre for Doctoral Training in Mathematics of Random \mbox{Systems}: Analysis, Modelling and Simulation (EP/S023925/1).}}
\author[2]{Eyal Neuman }
\author[2]{Sturmius Tuschmann \samethanks } 
\affil[1]{HSBC}
\affil[2]{Department of Mathematics, Imperial College London}
\date{\today}
\maketitle

\begin{abstract}
We derive a mesoscopic model for optimal execution with limit orders that incorporates microstructural features of passive price impact. Our framework is based on two empirical observables: the approximately exponential decay of limit-order fill probabilities with distance from the midprice, and the short-term linear response of price changes to order flow imbalance. Combining these ingredients, we obtain a reduced-form passive impact rate that decays exponentially with quote distance. The model describes passive execution at a tactical level, where fills arise from a sequence of quote adjustments that balance execution probability, adverse selection, and opportunity cost. We formulate and solve an optimal liquidation problem in which the trader controls the aggressiveness of passive sell quotes. This generates a trade-off between higher fill intensity and larger accumulated impact on the one hand, and lower impact but greater non-execution risk on the other. Empirical calibration using NASDAQ equities and public FX supports the empirical foundations of the model. We also analyse extensions with heterogeneous decay rates, transient impact, and target execution schedules.
\end{abstract}

\begin{description}
\item[Mathematics Subject Classification (2020):]  93E20, 60H30, 91G80
\item[JEL Classification:] C02, C61, G11

\item[Keywords:]  optimal trading, market impact, passive impact, limit orders 
\end{description}


\section{Introduction}\label{sec:introduction}

Price impact refers to the empirical observation that executing a large order adversely affects the price of a risky asset in a persistent manner, resulting in less favourable execution prices. Consequently, an agent seeking to liquidate a large position, known as a metaorder, typically splits it into smaller components, referred to as child orders, which are executed over a period of hours or days. Price impact is therefore a central ingredient in determining the optimal execution schedule of such an order \cite{bouchaud_bonart_donier_gould_2018,Cartea2015Algorithmic,Gueant2016Financial,Webster2023Handbook}.
Classical execution theories, such as those of \citet{Bertsimas1998Optimal} and \citet{Almgren2001Optimal}, model trading costs as the price response generated by an investor's own market orders. In this framework, the trader controls an execution rate and faces a trade-off between expected impact costs and inventory risk. Subsequent work introduced transient liquidity effects, market resilience, no-dynamic-arbitrage constraints and utilisation of short-term alphas (see, e.g., \citep{abijaber2026optimal,Gatheral2010Dynamic,lehalle2019signals,neuman2022optimal,Obizhaeva2013Optimal}).

In parallel, a large empirical literature has been devoted to the estimation of price impact at the metaorder level. Across asset classes and datasets, empirical studies have found that price impact is typically concave in executed volume and depends on participation rate, execution duration, and market liquidity in a manner often summarised by square-root or related power-law specifications \citep{Bacry2015Market,Bershova2013Nonlinear,Donier_15,hey2023concave,Lillo2003Master,Sato2025Strict,toth2016squarerootimpactlawholds,Toth_17,Zarinelli2015Beyond}. More recent work has focused on the calibration of cross-impact and on further developments in the nonparametric estimation of self- and cross-impact \citep{benzaquen2017dissecting,hey2024cross,Hey2025Nonparametric,coz2023cross,Mastromatteo2017,neuman2023statistical,neuman2023offline,schneider2019cross,tomas2022build}.
A notable feature of these empirical laws and their microstructural explanations is that they are formulated for metaorders rather than individual child orders. This observation suggests that the natural object of study for execution theory is not an isolated transaction, but rather the aggregate trading policy through which inventory is liquidated.

Most optimal execution models nevertheless associate price impact most directly with liquidity-taking trades. Passive execution is often introduced through a fill intensity term: a limit order posted farther from the midprice earns a better price conditional on execution but is less likely to be filled. This is the mechanism underlying the market-making model of \citet{Avellaneda2008High} and the high-frequency trading framework of \citet{Cartea2014Buy} (see also \citep{Gueant2012Optimal,Guilbaud2013Optimal,Huang2015Simulating}). In these models, the passive order affects the trader’s cash and inventory, but the fill itself is usually not endowed with an explicit permanent price effect.

\citet{HautschHuang2012} provide an econometric model showing that limit orders in an order book generate both short-run and long-run price impacts, estimated using Euronext Amsterdam stock data. The magnitude and direction of the impact depend on order aggressiveness, size, and the state of the order book, with cross-sectional variation largely explained by trading frequency and tick size. A microscopic model for such effects was provided by \citet{Cont2014Price}, who show empirically that, over short horizons, midprice changes are well explained by order flow imbalance at the best quotes, where the imbalance aggregates market orders, limit-order placements, and cancellations on both sides of the book. \citet{ouazzani2024theory} develop a model in which the insertion and cancellation of limit orders affect the intensity of market order arrivals in the best bid and ask queues in the limit order book. This, in turn, affects the order-book imbalance, which in turn moves the midprice. This is the closest theoretical antecedent to the present work and provides theoretical support for the idea that passive execution can have a systematic impact.

Nevertheless, the approach in \cite{ouazzani2024theory} is fundamentally an order-book model. This may not always be the natural state representation for quote-driven markets such as foreign exchange. Spot FX liquidity is fragmented across bilateral and multi-dealer venues, dealer streams, internalisation pools, last-look protocols, and credit-specific relationships. There is generally no single consolidated central order book whose queue lengths, cancellations, and priority rules summarise the market state. In this environment, a trader’s passive execution process is better described operationally as repeated interaction with dealer or venue quotes, rather than as the evolution of a transparent exchange queue. Recent work on FX internal liquidity by \citet{Barzykin2026FX} illustrates a closely related over-the-counter (OTC) origin of passive impact: when a passive flow is submitted by a client to an internal exchange, the market maker adjusts pricing in their external OTC franchise to facilitate matching flow. The dealer’s pricing and internalisation decisions therefore transmit passive liquidity demand into external quote dynamics, even before one has specified a full central order-book mechanism. A model calibrated to full placement and cancellation intensities in an equity-style order book may therefore be too detailed in the wrong state variables for FX, even though the underlying economic insight that order flow and quote-driven informational feedback move prices remains highly relevant.

The objective of this work is to retain the microstructural justification for passive impact while moving to a mesoscopic continuous formulation that is independent of the order book and also appropriate for optimal execution. The cornerstones of our model are two empirical ingredients. First, the fill intensity of limit orders decays exponentially with distance from the midprice, as in the Avellaneda–Stoikov model~\cite{Avellaneda2008High}. Second, short-horizon price changes respond linearly to order flow imbalance, as in~\citet{Cont2014Price}. Combining these ingredients in Section \ref{sec:derivation}, we derive a reduced-form passive impact rate for limit orders that decays exponentially with the quote distance and is proportional to the fill intensity. 

Our framework deliberately does not model every child order placement, cancellation, and reposting decision. Instead, it interprets passive execution at the tactical level. In practice, a single passive fill in an execution algorithm is rarely the result of a single untouched child order passively waiting in the book. It is typically achieved through a chain of quote submissions, cancellations, reposts, venue choices, and quote-distance adjustments designed to obtain a fill while controlling adverse selection and opportunity cost.
The economically relevant impact for optimal execution is therefore not the impact of an isolated passive child limit order, but rather the impact generated by the tactical strategy used to achieve a passive fill. The mesoscopic price impact term in our model ($\eta e^{-m\delta}$, where $\delta$ is the distance from the midprice and $m$ is an empirically estimated decay constant) summarises this chain by linking the trader’s chosen quote distance to both the probability of execution and the order flow imbalance conditional on execution. The special case in which the fill intensity and the passive impact term have the same exponential decay rate is of particular importance, since our NASDAQ stock estimates suggest that this is a good approximation for the equity sample. The case of different decay rates is then motivated by our FX estimates, where the gap between the two estimated decays is more visible.

The optimal execution problem for the empirically relevant equal-decay case, defined and solved explicitly in Section \ref{sec:model}, is a passive-impact analogue of the classical impact models described above. The agent liquidates their inventory by choosing the quote distance of passive sell orders. In this setup, a more aggressive quote increases the fill intensity but also increases the rate at which passive impact is accumulated, whereas a less aggressive quote reduces immediate impact but increases non-execution risk and inventory costs. The solution is further analysed in Section \ref{sec:illustrations} using empirically realistic fill intensity and price impact parameters, which are estimated in Section \ref{sec:empirics} using NASDAQ stock data and LSEG FX market data. 

We observe that the agent’s optimal quote is closer to the midprice when the remaining inventory is large, reflecting increased urgency due to risk aversion and the terminal penalty. This effect reverses toward the end of the trading period because inventory is marked to the impacted midprice, creating a situation in which the trader would prefer to incur the inventory penalty associated with an additional unit rather than bear the impact on the entire position (see Figure \ref{fig:quotes-baseline}). We also observe in Figure \ref{fig:inventory_quantilemap} that the median inventory trajectory for passive order execution is remarkably close to that of the Almgren–Chriss model, which considers only market orders. A key result of \citet{Almgren2001Optimal} is that permanent price impact does not affect the trading rate but only the optimal strategy’s P\&L. In contrast, in the passive-order case, permanent impact drives the convexity of the liquidation curve (see the detailed explanation below Figure \ref{fig:inventory_quantilemap}). 

Extensions of the model, including cases in which the fill intensity and the passive impact term have different decay rates, as well as models with transient passive impact and target execution schedules, are studied in Section~\ref{sec:extensions}. In particular, the fill intensity and passive impact decay rates are found to differ in the FX data reported in Table~\ref{tab:fx-parameter-estimates}, which makes the derivation of the optimal quote considerably more involved. One possible explanation is that this difference arises because our estimates are based only on public LSEG quote and trade data, whereas most FX trading occurs OTC, allowing traders to draw on additional sources of information. Moreover, available L2 information in FX is less detailed than L3 information in equities. Finally, Section~\ref{sec-rem} contains concluding remarks, while Sections~\ref{sec:proof-main} and \ref{sec:proofs-extensions} are devoted to the proofs of our results.

\section{Passive impact derivation}\label{sec:derivation}

Consider a stylised model for a limit order book over a short time interval \([t,t+h]\), where \(\smash{L^{b/a}_{t,t+h}}\), \(\smash{C^{b/a}_{t,t+h}}\), and \(\smash{M^{b/a}_{t,t+h}}\) denote the total size of limit order placements, limit order cancellations, and market orders in \([t,t+h]\) on the best bid and best ask levels respectively. Following \citet{Cont2014Price}, define the corresponding order flow imbalance (OFI) over the interval \([t,t+h]\) by
\begin{equation}\label{eq:OFI}
    \operatorname{OFI}_{t,t+h}
    =
    L^{b}_{t,t+h}
    -
    C^{b}_{t,t+h}
    -
    M^{a}_{t,t+h}
    -
    L^{a}_{t,t+h}
    +
    C^{a}_{t,t+h}
    +
    M^{b}_{t,t+h}.
\end{equation}
A key finding of \cite{Cont2014Price} is that changes in the midprice \(S\) over the same interval follow
\begin{equation}\label{eq:OFI-relation}
    \Delta S_{t,t+h}
    =
    \beta\,\operatorname{OFI}_{t,t+h}
    +
    \eps_{t,t+h},
\end{equation}
where
\begin{equation}\label{eq:DeltaS}
    \Delta S_{t,t+h}:=S_{t+h}-S_t,
\end{equation}
$\eps_{t,t+h}$ is a noise term capturing influences of other factors, such as deeper levels of the book, and \(\beta\) is the price impact coefficient. This linear relation captures the empirical observation of \cite{Cont2014Price} that short-term price changes are primarily driven by order flow imbalance at the best bid and ask quotes.

In the appendix of their original paper \cite{Cont2014Price}, the authors briefly extended the OFI framework to also include deeper levels of the order book. The corresponding framework of multi-level order flow imbalance (MLOFI) was made precise and analysed in detail in the empirical study by \citet{xu2018multi}. For $d=1,\ldots,\bar d$, let \(\smash{L^{d,b/a}_{t,t+h}}\), \(\smash{C^{d,b/a}_{t,t+h}}\), and \(\smash{M^{d,b/a}_{t,t+h}}\) denote the total size of limit order placements, limit order cancellations, and market orders in \([t,t+h]\) on the $d$-th bid level and $d$-th ask level respectively. Analogously to \eqref{eq:OFI}, define the corresponding MLOFI over the interval \([t,t+h]\) by
\begin{equation}\label{eq:MLOFI}
    \operatorname{MLOFI}^d_{t,t+h}
    =
    L^{d,b}_{t,t+h}
    -
    C^{d,b}_{t,t+h}
    -
    M^{d,a}_{t,t+h}
    -
    L^{d,a}_{t,t+h}
    +
    C^{d,a}_{t,t+h}
    +
    M^{d,b}_{t,t+h},
    \qquad
    d=1,\ldots,\bar d.
\end{equation}
For $\bar d=1$, MLOFI and OFI are identical. Motivated by the linear relation \eqref{eq:OFI-relation} between OFI and price changes, the authors of \cite{xu2018multi} perform a multiple ridge regression and similarly show a linear relation between MLOFI and price changes. In particular, they find that the corresponding out-of-sample goodness-of-fit improves with each additional included price level, that is, it increases in $\bar d$. Here the regression coefficients $\beta^1,\ldots,\beta^{\bar d}$ in $\mathrm{ticks}\cdot\mathrm{shares}^{-1}$ can be interpreted as price impact coefficients, where $\beta^d$ captures the partial contribution of MLOFI at the $d$-th order-book level to the price change (see \cite{xu2018multi}, Tables 8 and 9).

Now, suppose a trader places a limit order at distance \(\delta\) ticks from the midprice. \citet{Avellaneda2008High} model the execution of such an order by a Poisson process whose intensity $\Lambda(\delta)$ decreases with its distance $\delta$ from the midprice. Their specification is motivated by two empirical observations. First, market order sizes follow a power-law distribution \cite{gabaix2003theory,Gopikrishnan2000Statistical,Maslov2001Price}. Second, price changes following market orders are approximately proportional to the logarithm of market order size \cite{Potters2003More}. Combining these two observations, \citet{Avellaneda2008High} obtain a fill intensity in $\mathrm{shares}\cdot \mathrm{s}^{-1}$ of the form
\be\label{eq:execution-intensity}
    \Lambda(\delta)=\lambda e^{-k\delta},
\ee
where \(\lambda,k\) are positive constants. Thus, the expected time taken until the limit order is filled in $\mathrm{s}\cdot\mathrm{shares}^{-1}$ is equal to $\lambda^{-1} e^{k\delta}$. Therefore, returning to the MLOFI model, and working directly in terms of the posted distance from the midprice $\delta$ instead of order book level $d$, the expected price impact rate caused by posting at distance $\delta$ in $\mathrm{ticks}\cdot\mathrm{s}^{-1}$ is given by
\begin{equation} \label{eq:expected_impact_beta}
    \beta(\delta)\big/{\lambda^{-1} e^{k\delta}} = \beta(\delta)\lambda e^{-k \delta}.
\end{equation}
Motivated by the empirical results of \cite{xu2018multi}, we propose as a reduced-form specification that $\beta(\delta)$ decays approximately exponentially in $\delta$, which will also be examined empirically in Section \ref{sec:empirics}. In particular, for
\begin{equation}\label{eq:mlofi-decay}
    \beta(\delta) =\xi e^{-\ell \delta},
\end{equation}
where $\xi,\ell$ are constants, we can write the passive impact rate \eqref{eq:expected_impact_beta} as
\begin{equation} \label{eq:passive-impact}
    \xi\lambda e^{-(k+\ell)\delta} =: \eta e^{-m\delta},
\end{equation}
where the units of $\eta:=\xi\lambda$ and $m:=k+\ell$ are $\mathrm{ticks}\cdot\mathrm{s}^{-1}$ and $\mathrm{ticks}^{-1}$ respectively. In the execution model presented in Section \ref{sec:model}, the term $\eta e^{-m\delta}$ is interpreted as a mesoscopic average passive impact rate, rather than as the literal pathwise impact of one isolated child order. The remaining idiosyncratic price response is absorbed into the noise component of the price dynamics. In Section \ref{sec:empirics}, we find that for equities, $\ell$ is very small relative to $k$, so that $m\approx k$, whereas for FX, the gap is more visible, motivating the extension to distinct decay rates. The case $m=k$ corresponds to passive impact that decays in the distance to the midprice at the same rate of fill intensity.  

\begin{remark}
The above execution model can be interpreted as the continuous-time limit of a discrete sequence of quote submissions. Suppose that, in order to maintain a quote at distance \(\delta\) from the moving midprice, the trader repeatedly cancels and reposts the limit order. Using the execution intensity \eqref{eq:execution-intensity}, if each posted order remains active for a short time \(\tau>0\) then its probability of being filled during one attempt is
\[
    p_\delta
    =
    1-\exp(-\lambda e^{-k\delta}\tau)
    =
    \lambda\tau e^{-k\delta}+o(\tau).
\]
Thus the number \(G_\delta\) of submissions required until execution is geometrically distributed with parameter \(p_\delta\), and hence, for small $\tau$,
\be\label{eq:Gdelta}
    \mathbb E[G_\delta]
    =
    \frac{1}{p_\delta}
    =
    \frac{1}{\lambda\tau e^{-k\delta}}+o\bigg(\frac{1}{\tau}\bigg).
\ee
In particular, deeper quotes require exponentially more submissions on average before execution. Multiplying \eqref{eq:Gdelta} by the attempt length $\tau$ gives
\[
    \mathbb E[\tau G_\delta]
    =
    \frac{\tau}{p_\delta}
    \longrightarrow
    \lambda^{-1} e^{k\delta},
    \qquad\text{as } \tau\downarrow0,
\]
which is precisely the mean waiting time of an exponential random variable with rate \(\lambda e^{-k\delta}\). Hence the Poisson dynamics used above can be viewed as the continuous-time limit of a microscopic trial-and-error mechanism based on repeated limit-order submissions. In our model, this microscopic mechanism is summarised by the execution intensity \eqref{eq:execution-intensity}.
\end{remark}

\section{Model setup and preliminary results}\label{sec:model}

In this section, we formulate the optimal execution problem based on the passive impact rate derived in Section~\ref{sec:derivation}, and solve the case $m=k$ explicitly. 

\subsection{Model setup}

We fix a finite deterministic time horizon $T>0$ and a filtered probability space $(\Omega,\mathcal{F},(\mathcal{F}_t)_{t\in [0,T]},\P)$ satisfying the usual conditions of right-continuity and completeness. We consider a trader who seeks to liquidate $q_0\in\N$ inventory units over the finite time interval $[0,T]$ using unit sell limit orders. The distance $\delta$ to the midprice of the posted limit orders is chosen by the trader from the class of admissible strategies
\begin{equation}\label{eq:A}
    \mathcal{A}:=\bigg\{\delta:\Omega\times [0,T]\to\R\,\bigg|\, \delta\textrm{ is }(\mathcal{F}_t)_{t\in [0,T]}\textrm{-predictable, }\exists C<\infty:\ \delta\geq C\ \P\otimes dt-\textrm{a.e.}\bigg\}.
\end{equation}
In particular, negative quote distances $\delta<0$ are allowed and may be interpreted as aggressive or marketable limit orders. Let $(N_t)_{t\in [0,T]}$ be a counting process with stochastic intensity 
\begin{equation}\label{eq:execution-intensity-model}
\lambda e^{-k \delta_t}\mathbf{1}_{\{Q_{t-}>0\}},
\qquad
t\in [0,T],
\end{equation}
as in \eqref{eq:execution-intensity}, representing the cumulative fills of the trader's limit orders by time $t$, where $\lambda$ has units of inverse time for a unit order, $k$ is a positive constant, and $(Q_t)_{t\in [0,T]}$ denotes the trader's inventory given by
\be\label{eq:inventory}
    Q_t
    =
    q_{0}
    -
    N_t,
    \qquad
    t\in [0,T].
\ee
In particular, the trader stops trading after all $q_0$ shares are fully liquidated. We moreover assume that the trader's sell limit orders cause permanent passive impact until full liquidation at rate 
\be\label{eq:passive-impact-model}
\eta e^{-m \delta_t}\mathbf{1}_{\{Q_{t-}>0\}},\qquad t\in [0,T],
\ee
as in \eqref{eq:passive-impact}, where $\eta,m$ are nonnegative constants. Let $S_0\in\R$ denote the initial midprice, let $\sigma$ be a nonnegative volatility constant, and let $(W_t)_{t\in [0,T]}$ be a standard Brownian motion on $(\Omega,\mathcal{F},(\mathcal{F}_t)_{t\in [0,T]},\P)$. The impacted midprice process $(S_t)_{t\in [0,T]}$ is then given by
\begin{equation}\label{eq:S}
    S_t
    = S_{0} +
    \sigma W_t
    -
    \eta  \int_{0}^{t} e^{-m \delta_u}\mathbf 1_{\{Q_{u-}>0\}}\,du,
\end{equation}
where the drift term in \eqref{eq:S} captures the permanent passive impact specified in \eqref{eq:passive-impact-model}. The trader's cash process $(X_t)_{t\in [0,T]}$ evolves as
\be
    X_t
    =
    \int_{0}^{t}
    (S_u + \delta_u) dN_u.
\ee
Let $\mathbb{E}_{t, x, q, s}[\cdot]$ denote the expectation conditional on $X_{t}=x$, $Q_{t}=q$, $S_{t}=s$. The trader's goal is to maximise expected cash, net of running and terminal inventory penalties, over the horizon $[t,T]$:
\begin{equation}\label{eq:optimisation-problem}
     u(t,x,q,s)
    :=
    \sup_{\delta \in \mathcal A}
    \mathbb E_{t,x,q,s}
    \bigg[
        X_T + Q_T S_T 
        - \phi \int_t^T Q_u^2\,du
        - \alpha Q_T^{2}
    \bigg],
\end{equation}
where $\mathcal{A}$ is the set of admissible controls from \eqref{eq:A}, and $\phi,\alpha$ are nonnegative constants. The first term on the right-hand side of \eqref{eq:optimisation-problem} represents the trader's final cash position, the second term captures the final book value of the remaining shares valued at the impacted midprice, and the third and fourth terms represent the running and terminal inventory penalties respectively.

\subsection{HJB equation and optimal strategy}

Recall that $m=k+\ell$. As will be seen in Section \ref{sec:empirics}, the case $m=k$ is a very close approximation to reality for equity markets (where $\ell\ll k$). In this case, the optimisation problem \eqref{eq:optimisation-problem} has a particularly elegant solution. Namely, the associated Hamilton--Jacobi--Bellman (HJB) equation is
\begin{equation}
\label{eq:hjb}
\begin{aligned}
    0
    =
    &\frac{\partial u}{\partial t}(t,x,q,s)
    + \frac{1}{2}\sigma^2 \frac{\partial^{2} u}{\partial s^{2}}(t,x,q,s)
    - \phi q^2 \\
    &+ \mathbf 1_{\{q>0\}}
    \sup_{\delta}
    \bigg\{
        \lambda e^{-k\delta}
        \bigg(
             u(t,x+s+\delta,q-1,s) - u(t,x,q,s)
            - \frac{\eta}{\lambda} \frac{\partial u}{\partial s}(t,x,q,s)
        \bigg)
    \bigg\},
\end{aligned}
\end{equation}
with boundary and terminal conditions
\be
     u(t,x,0,s) = x, 
     \qquad 
     u(T,x,q,s) = x + qs - \alpha q^{2}.
\ee

\begin{theorem}\label{thm:main}
Let $m=k$. The optimal quote is given by
\begin{equation}
\label{eq:optimal-quote-omega}
    \delta^\star(t,q)
    =
    \frac{1}{k}
    +
    \frac{1}{k}
    \log\bigg(
        \frac{\omega(t,q)}{\omega(t,q-1)}
    \bigg)
    +
    \frac{\eta}{\lambda} q,
    \qquad 
    t\in [0,T],\ q=1,\dots,q_0,
\end{equation}
where
\begin{equation}
\label{eq:omega-matrix-representation}
    \begin{pmatrix}
    \omega(t,q_0)\\
    \omega(t,q_0-1)\\
    \vdots\\
    \omega(t,1)\\
    \omega(t,0)
\end{pmatrix}
    =
    \exp\big(-(T-t)A\big)
    \begin{pmatrix}
    e^{-k\alpha q_0^2}\\
    e^{-k\alpha (q_0-1)^2}\\
    \vdots\\
    e^{-k\alpha}\\
    1
\end{pmatrix},
\end{equation}
and \(A\in\mathbb R^{(q_0+1)\times(q_0+1)}\) is the upper bidiagonal matrix
\begin{equation}
\label{eq:omega-matrix-A}
A
=
\begin{pmatrix}
    q_0^2k\phi
    &
    -e^{-1}\lambda e^{-k\frac{\eta}{\lambda} q_0}
    &
    0
    &
    0
    &
    \cdots
    &
    0
    \\
    0
    &
    (q_0-1)^2k\phi
    &
    -e^{-1}\lambda e^{-k\frac{\eta}{\lambda} (q_0-1)}
    &
    0
    &
    \cdots
    &
    0
    \\
    \vdots
    &
    \ddots
    &
    \ddots
    &
    \ddots
    &
    \ddots
    &
    \vdots
    \\
    0
    &
    \cdots
    &
    0
    &
    0
    &
    k\phi
    &
    -e^{-1}\lambda e^{-k\frac{\eta}{\lambda}}
    \\
    0
    &
    \cdots
    &
    0
    &
    0
    &
    0
    &
    0
\end{pmatrix}.
\end{equation}
Moreover,
\begin{equation}
    u(t, x, q, s) = x + qs + \frac{1}{k}\log\big(\omega(t, q)\big)
\end{equation}
solves the HJB equation \eqref{eq:hjb}.
\end{theorem}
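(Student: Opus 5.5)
The plan is to make the standard Avellaneda--Stoikov/Guéant-type ansatz $u(t,x,q,s)=x+qs+h(t,q)$ and substitute it into \eqref{eq:hjb}. Under this ansatz $\partial_s^2 u=0$ and $\partial_s u=q$. The jump difference is
\[
u(t,x+s+\delta,q-1,s)-u(t,x,q,s)=\delta+h(t,q-1)-h(t,q).
\]
For $q\ge1$ the HJB therefore reduces to the system of ODEs
\[
0=\partial_t h(t,q)-\phi q^2+\sup_\delta \lambda e^{-k\delta}\big(\delta+h(t,q-1)-h(t,q)-\tfrac{\eta}{\lambda}q\big).
\]
The conditions $h(t,0)=0$ and $h(T,q)=-\alpha q^2$ come from the boundary and terminal conditions. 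The fact that $m=k$ is used exactly here: the passive-impact term carries the same factor $e^{-k\delta}$ as the fill intensity, so it simply shifts $\delta$ by the constant $\frac{\eta}{\lambda}q$.

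Next I solve the pointwise maximisation. Writing $c=h(t,q-1)-h(t,q)-\frac{\eta}{\lambda}q$, the map $\delta\mapsto\lambda e^{-k\delta}(\delta+c)$ is maximised at $\delta^\star=\frac1k-c$, with maximal value $\frac{\lambda}{k}e^{-1}e^{kc}$. This gives
\[
\delta^\star=\frac1k+h(t,q)-h(t,q-1)+\frac{\eta}{\lambda}q,
\]
which is \eqref{eq:optimal-quote-omega} once we set $h=\frac1k\log\omega$. Substituting $\omega=e^{kh}$, the equation becomes
\[
\partial_t h=\phi q^2-\frac{\lambda}{k}e^{-1}e^{-k\eta q/\lambda}\,\frac{\omega(t,q-1)}{\omega(t,q)}.
\]
Since $\partial_t\omega=k\omega\,\partial_t h$, multiplying through by $k\omega(t,q)$ gives the linear system
\[
\partial_t\omega(t,q)=k\phi q^2\,\omega(t,q)-\lambda e^{-1}e^{-k\eta q/\lambda}\,\omega(t,q-1).
\]
For $q=0$ we have $\partial_t\omega(t,0)=0$ and $\omega(t,0)=1$. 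The terminal data are $\omega(T,q)=e^{-k\alpha q^2}$.

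In vector form, with entries ordered from $q_0$ down to $0$, this reads $\partial_t\omega=A\omega$, where $A$ is exactly the upper bidiagonal matrix \eqref{eq:omega-matrix-A} and the last row is zero. The unique solution with the given terminal value is therefore $\omega(t)=\exp(-(T-t)A)\,\omega(T)$, which is \eqref{eq:omega-matrix-representation}. Reversing the substitutions then verifies that $u=x+qs+\frac1k\log\omega$ solves \eqref{eq:hjb} together with its boundary and terminal conditions.

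The main obstacle is checking that $\omega(t,q)>0$ for all $t$ and $q$, so that the logarithms are defined; this is the only delicate point. I would use the backward recursion. Setting $\tau=T-t$, the equation is $\partial_\tau\omega(q)=-k\phi q^2\omega(q)+\lambda e^{-1}e^{-k\eta q/\lambda}\omega(q-1)$. Variation of constants gives $\omega(q)$ as $e^{-k\phi q^2\tau}\omega(T,q)$ plus an integral of the nonnegative forcing term involving $\omega(q-1)$. Induction on $q$, starting from $\omega(\cdot,0)=1>0$, then gives strict positivity. Equivalently, $-A$ has nonnegative off-diagonal entries (it is a Metzler matrix), so $\exp(-\tau A)$ is entrywise nonnegative with positive diagonal. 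A full verification theorem showing optimality of $\delta^\star$ over $\mathcal A$ is beyond what the statement claims, since it only asserts that $u$ solves the HJB. If needed, it would follow by applying Itô's formula to $u(t,X_t,Q_t,S_t)$: the controls are bounded below, so the intensities are bounded, and the supremum in the HJB then gives the inequality for every admissible $\delta$, with equality at $\delta^\star$.
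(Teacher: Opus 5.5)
Your proposal is correct and follows the paper's proof essentially step for step: the ansatz $u=x+qs+\theta(t,q)$, the first-order condition giving $\delta^\star=\frac1k+\theta(t,q)-\theta(t,q-1)+\frac{\eta}{\lambda}q$, the transform $\omega=e^{k\theta}$ producing the triangular linear system, and its solution by the matrix exponential. Your additional check that $\omega>0$, via variation of constants (equivalently, the Metzler structure of $-A$), fills in a point the paper leaves implicit and is needed for the logarithms in the statement to be defined.
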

The proof of Theorem~\ref{thm:main} is given in Section~\ref{sec:proof-main}.

\begin{remark}
The function \(\omega\) in \eqref{eq:omega-matrix-representation} admits a recursive representation given by
\begin{equation}
\label{eq:omega-recursive-solution}
    \omega(t,q)
    =
    e^{-k\phi q^2(T-t)}e^{-k\alpha q^2}
    +
    e^{-1}\lambda e^{-k\frac{\eta}{\lambda} q}
    \int_t^T
        e^{-k\phi q^2(r-t)}
        \omega(r,q-1)\,dr,\qquad t\in [0,T],\ q\geq 1.
\end{equation}
\end{remark}

\begin{remark}
If one uses a CARA utility criterion, the optimal quote can be derived in an analogous way. In that case, letting $\gamma>0$ denote the trader’s CARA risk aversion and assuming $\lambda>\gamma\eta q_0$, the optimal quote is given by
\[
\delta^{\star}(t,q)
=
\frac{1}{k}
\log\bigg(
\frac{w_q^\gamma(t)}{w_{q-1}^\gamma(t)}
\bigg)
+
\frac{1}{\gamma}
\log\bigg(1+\frac{\gamma}{k}\bigg)
+
\frac{1}{\gamma}
\log\bigg(\frac{\lambda}{\lambda-\gamma\eta q}\bigg),
\qquad
t\in [0,T],\ q=1,\dots,q_0,
\]
where $w_q^\gamma(t)$ is the auxiliary function arising from the CARA utility. Therefore, passive impact shifts the optimal quote by the inventory-dependent term
$
\frac{1}{\gamma}
\log(\frac{\lambda}{\lambda-\gamma\eta q}).
$
For $\eta=0$, that is, no passive impact, one recovers the corresponding solution of \citet{Gueant2012Optimal} (see Theorem 1 therein).
\end{remark}

\section{Numerical illustrations}\label{sec:illustrations}

We now illustrate the behaviour of the optimal passive execution strategy for $m=k$. The parameters used for our figures and simulations are shown in Table \ref{tab:params}. We use a tick size of $\$0.01$, so that ticks are converted to dollars. In the numerical examples, one inventory unit corresponds to $1000$ shares. Thus one jump of \(N\) from \eqref{eq:execution-intensity-model} corresponds to the execution of \(1000\) shares, and \(q_0=20\) corresponds to \(20{,}000\) shares.

\begin{table}[H]
\begin{center}
\caption{Model parameter values.}\label{tab:params}
\vskip-0.2cm
\begin{tabular}{lcl}
\toprule
\textbf{Parameter} & \textbf{Value} & \textbf{Description}\\
\midrule
$T$          & 300   & trading horizon ($\mathrm{s}$) \\
$q_0$        & 20     & initial inventory (inventory units) \\
$S_0$        & 100 & initial midprice ($\$\cdot\mathrm{shares}^{-1}$) \\
$\sigma$     & 0.003  & price volatility ($\smash{\$\cdot\mathrm{shares}^{-1}\cdot\mathrm{s}^{-1/2}}$) \\
$\lambda$    & 1.4   & fill intensity ($\mathrm{inventory\ units}\cdot\mathrm{s}^{-1}$) \\
$k$     & 48   & fill intensity decay ($\$^{-1}$) \\
$\eta$      & 0.0028   & passive price impact coefficient ($\$\cdot\textrm{s}^{-1}$) \\
$\phi$       & 0.00001 & running inventory penalty ($\$\cdot\mathrm{s}^{-1}\cdot\mathrm{inventory\ units}^{-2}$) \\
$\alpha$     & 0.00001 & terminal inventory penalty ($\$\cdot\mathrm{inventory\ units}^{-2}$) \\
\bottomrule
\end{tabular}
\end{center}
\end{table}

Figure~\ref{fig:quotes-baseline} shows the optimal quote depth $\delta^\star(t,q)$ as a function of time. Each curve corresponds to a different inventory level $q$. The figure shows how the quote posted by the trader changes over the trading horizon and how this depends on the remaining inventory. Lighter colours indicate larger remaining inventories. We see that the optimal quote is closer to the midprice when the remaining inventory to liquidate is larger, corresponding to increased urgency due to risk aversion and terminal penalty. This effect reverses at the end of the trading period because the inventory will be marked to the impacted midprice, creating a situation where the trader would prefer to pay the inventory penalty of an additional unit than pay impact on the entire position.

\begin{figure}[H]
    \centering
    \includegraphics[width=0.64\textwidth]{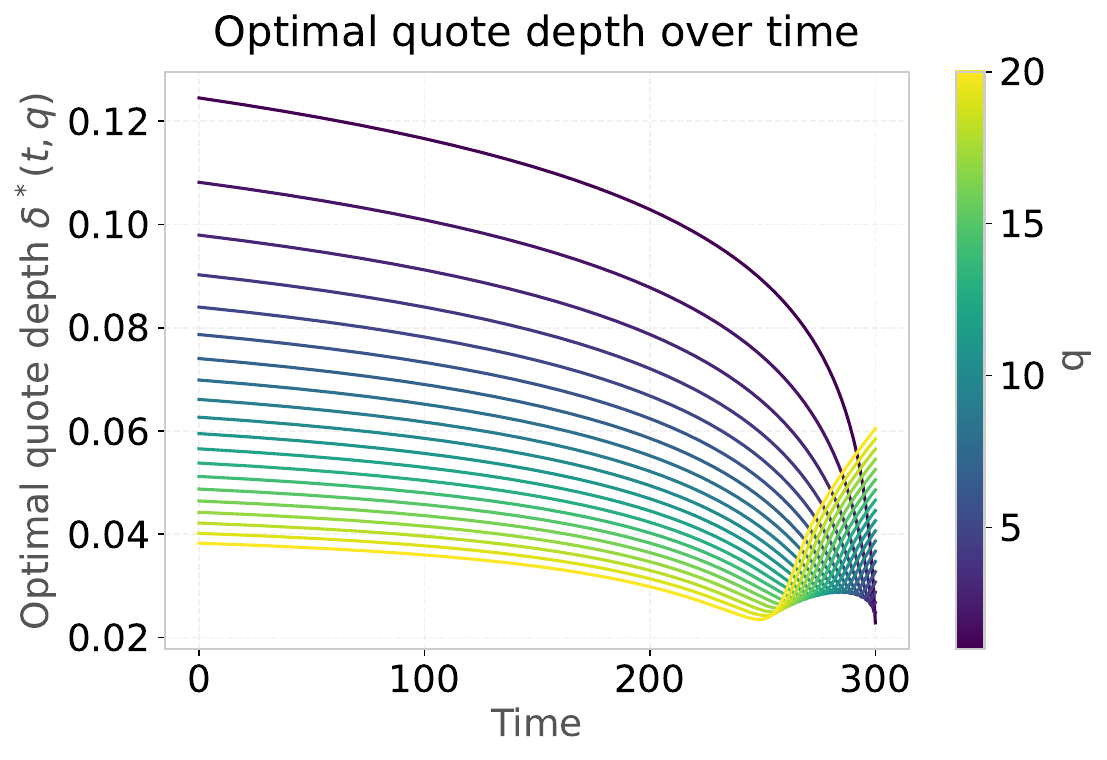}
    \caption{Optimal quote depth $\delta^\star(t,q)$ over time for different inventory levels $q$.}
    \label{fig:quotes-baseline}
\end{figure}

\begin{figure}[htb]
    \centering
    \includegraphics[width=0.7\textwidth]{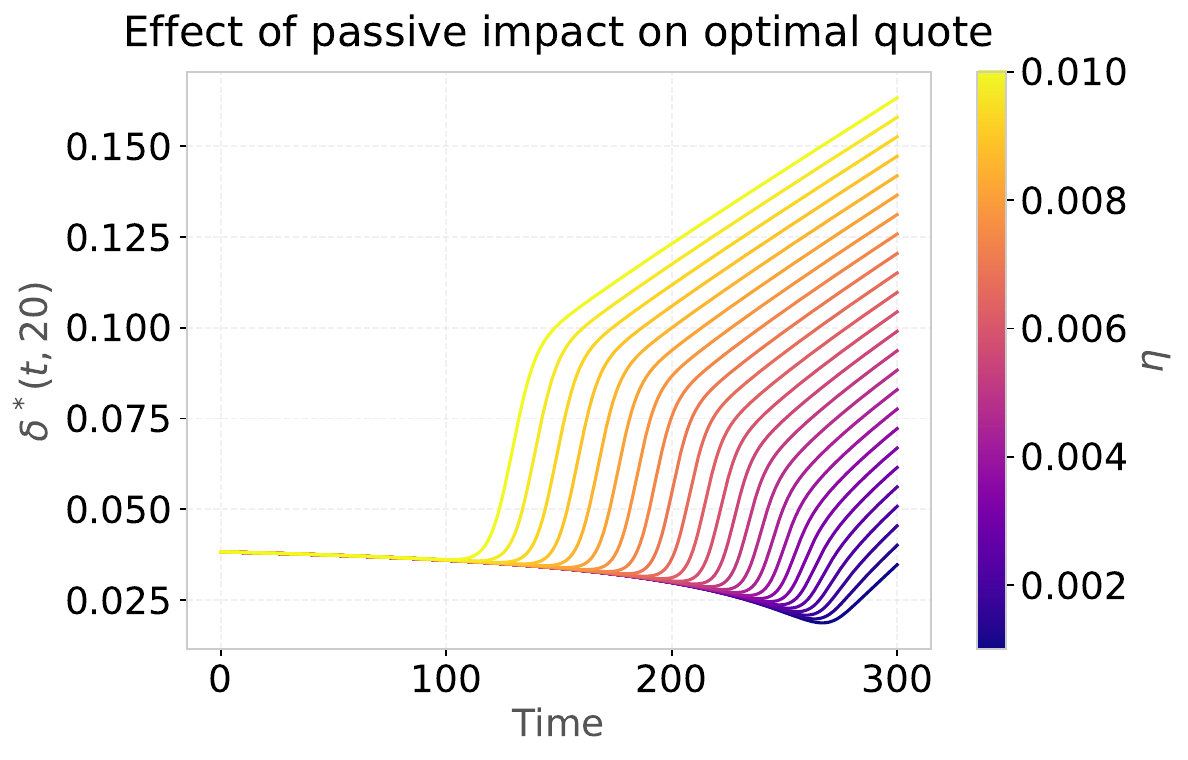}
    \caption{Effect of the passive impact parameter $\eta$ on the optimal quote depth $\delta^\star(t,20)$.}
    \label{fig:compare-eta}
\end{figure}

Figure~\ref{fig:compare-eta} compares the optimal quote depth $\delta^\star(t,20)$ across different values of the passive impact parameter $\eta$. Each curve corresponds to a different value of $\eta$, with darker curves representing larger passive impact. The figure shows that increasing $\eta$ shifts the optimal quote depth upwards. This means that when passive impact is stronger, the trader quotes less aggressively in order to reduce the price impact generated by executions. Additionally, we see that increased $\eta$ means that the trader enters earlier the state at which they prefer not to incur impact to protect the value of their position. Thus they begin to quote further from the midprice to compensate themselves for impact. Before that point, the impact has no bearing on their quotes, which are instead driven by risk aversion entirely.

\begin{figure}[htb]
    \centering
    \includegraphics[width=0.85\textwidth]{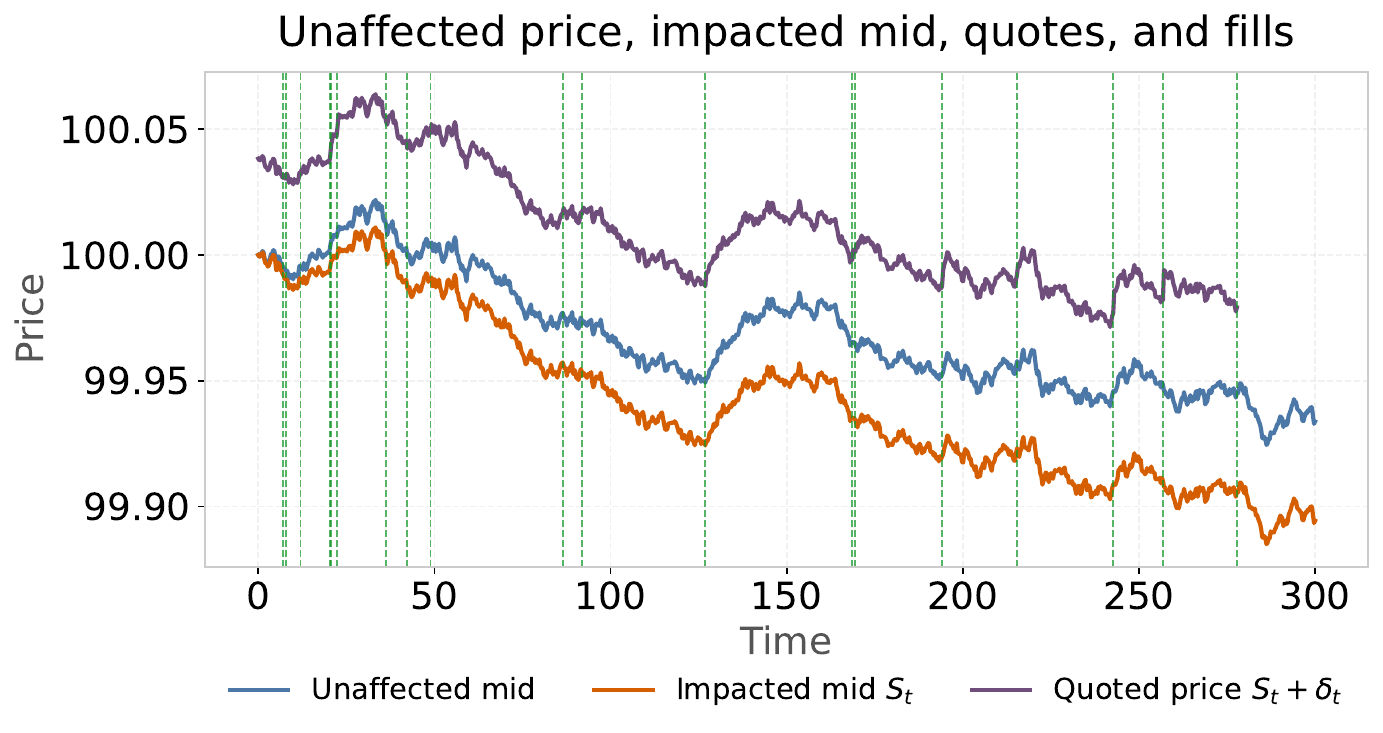}
    \caption{Simulated unaffected midprice (blue), impacted midprice (red), quoted price (purple), and fill times (green vertical lines) under the baseline strategy.}
    \label{fig:simulated-prices-baseline}
\end{figure}

Figure~\ref{fig:simulated-prices-baseline} shows a realisation of price paths using parameters from Table \ref{tab:params}. In particular the trader is executing $q_{0}=20$ units. The red line denotes the impacted midprice $S_t$, while the blue line denotes the unaffected price process. The purple line shows the quoted price $S_t+\delta_t$. The vertical green markers indicate the times at which passive limit orders are filled. At around the 275th second, the trader completes execution of their order and therefore their quoting stops.

\begin{table}[H]
\begin{center}
\caption{Performance metrics.}\label{tab:performance_metrics}
\vskip-0.2cm
\small
\setlength{\tabcolsep}{9pt}
\renewcommand{\arraystretch}{1.45}
\setlength{\dashlinedash}{0.4pt}
\setlength{\dashlinegap}{2pt}
\arrayrulecolor{gray!60}
\makebox[\textwidth][c]{%
\begin{tabular}{
    >{\raggedright\arraybackslash}p{0.18\textwidth}
    :
    >{\raggedright\arraybackslash}p{0.42\textwidth}
    :
    >{\centering\arraybackslash}p{0.28\textwidth}
}
\toprule
\textbf{Metric} & \textbf{Description} & \textbf{Expression} \\
\midrule
\arrayrulecolor{gray!60}

\textbf{Final inventory}
&
Number of remaining inventory units at the end of the trading period. Since the trader wishes to liquidate their entire position, smaller values are preferred.
&
$\displaystyle Q_T$
\\[0.2em]
\hdashline

\textbf{Trading time}
&
Time taken in seconds for the entire position to be liquidated, or $T$ if the position is never fully liquidated.
&
$\displaystyle \inf\{t\in[0,T] : Q_t = 0\} \wedge T$
\\[0.2em]
\hdashline

\textbf{Implementation shortfall}
&
Difference between the initial midprice and the average execution price. Positive implementation shortfall indicates a worse outcome, as the metaorder traded at a lower average price than the initial midprice.

Here, $\tau_n$ is the time of the $n$-th fill, given by $\inf\{\tau\in[0,T] : N_\tau = n\}$. Recall that $(N_{t})_{t\in [0,T]}$ is the counting process representing cumulative limit order fills.
&
$\displaystyle
S_0 - \frac{1}{N_T}\sum_{n=1}^{N_T}
(S_{\tau_n} + \delta_{\tau_n})
$
\\[0.2em]
\hdashline

\textbf{P\&L}
&
Net trading profit in dollars from executing the trade. This is the interior of the value function \eqref{eq:optimisation-problem}, without the inventory penalty terms and the value of the initial inventory marked to the midprice.
&
$\displaystyle X_T + Q_T S_T - q_0 S_0$
\\[0.2em]

\arrayrulecolor{black}
\bottomrule
\end{tabular}%
}
\end{center}
\end{table}

Table \ref{tab:eta_summary} reports means (with 95\% confidence intervals) of several metrics for a Monte Carlo simulation of the model for different values of the permanent price impact parameter $\eta$ defined in \eqref{eq:S}. We simulate $1000$ paths, with $1000$ time-steps each. The metrics we use are final inventory, trading time, implementation shortfall, and P\&L. Each is defined in Table \ref{tab:performance_metrics}.

\begin{table}[H]
\begin{center}
\caption{Summary statistics by $\eta$.}\label{tab:eta_summary}
\vskip-0.2cm
\small
\setlength{\tabcolsep}{6pt}
\renewcommand{\arraystretch}{1.15}
\setlength{\dashlinedash}{0.4pt}
\setlength{\dashlinegap}{2pt}
\arrayrulecolor{gray!60}
\makebox[\textwidth][c]{%
\begin{tabular}{l:c:c:c:c}
\toprule
\textbf{$\eta$}
& \begin{tabular}{@{}c@{}}\textbf{Final inventory}\\ \textbf{(thousand shares)}\end{tabular}
& \begin{tabular}{@{}c@{}}\textbf{P\&L}\\ \textbf{(thousand \$)}\end{tabular}
& \begin{tabular}{@{}c@{}}\textbf{Implementation shortfall}\\ \textbf{(thousand \$)}\end{tabular}
& \begin{tabular}{@{}c@{}}\textbf{Trading time}\\ \textbf{(s)}\end{tabular} \\
\midrule
\arrayrulecolor{gray!60}

$\mathbf{0.0}$
& \begin{tabular}{@{}c@{}}$1.003$\\$(0.932,\ 1.074)$\end{tabular}
& \begin{tabular}{@{}c@{}}$0.320$\\$(0.301,\ 0.339)$\end{tabular}
& \begin{tabular}{@{}c@{}}$-0.035$\\$(-0.037,\ -0.034)$\end{tabular}
& \begin{tabular}{@{}c@{}}$284.918$\\$(283.257,\ 286.578)$\end{tabular}
\\
\hdashline

$\mathbf{0.005}$
& \begin{tabular}{@{}c@{}}$1.928$\\$(1.816,\ 2.040)$\end{tabular}
& \begin{tabular}{@{}c@{}}$0.071$\\$(0.049,\ 0.093)$\end{tabular}
& \begin{tabular}{@{}c@{}}$-0.019$\\$(-0.021,\ -0.017)$\end{tabular}
& \begin{tabular}{@{}c@{}}$290.153$\\$(288.674,\ 291.631)$\end{tabular}
\\
\hdashline

$\mathbf{0.01}$
& \begin{tabular}{@{}c@{}}$6.367$\\$(6.195,\ 6.538)$\end{tabular}
& \begin{tabular}{@{}c@{}}$-0.011$\\$(-0.040,\ 0.019)$\end{tabular}
& \begin{tabular}{@{}c@{}}$-0.061$\\$(-0.063,\ -0.058)$\end{tabular}
& \begin{tabular}{@{}c@{}}$298.314$\\$(297.617,\ 299.012)$\end{tabular}
\\

\arrayrulecolor{black}
\bottomrule
\end{tabular}%
}
\end{center}
\end{table}
We observe in the first column of Table \ref{tab:eta_summary} that when there is more impact for posting, the trader on average liquidates much less of their position over the trading period due to more conservative quoting. Nevertheless, they also have lower P\&L, as seen in the second column, but not worse implementation shortfall,  as seen in the third column. This is because the higher distances from the midprice compensate for the increased impact. This more conservative quoting also leads to higher trading times on average, as seen in the fourth column. 

\begin{figure}[H]
    \centering
    \includegraphics[width=0.85\textwidth]{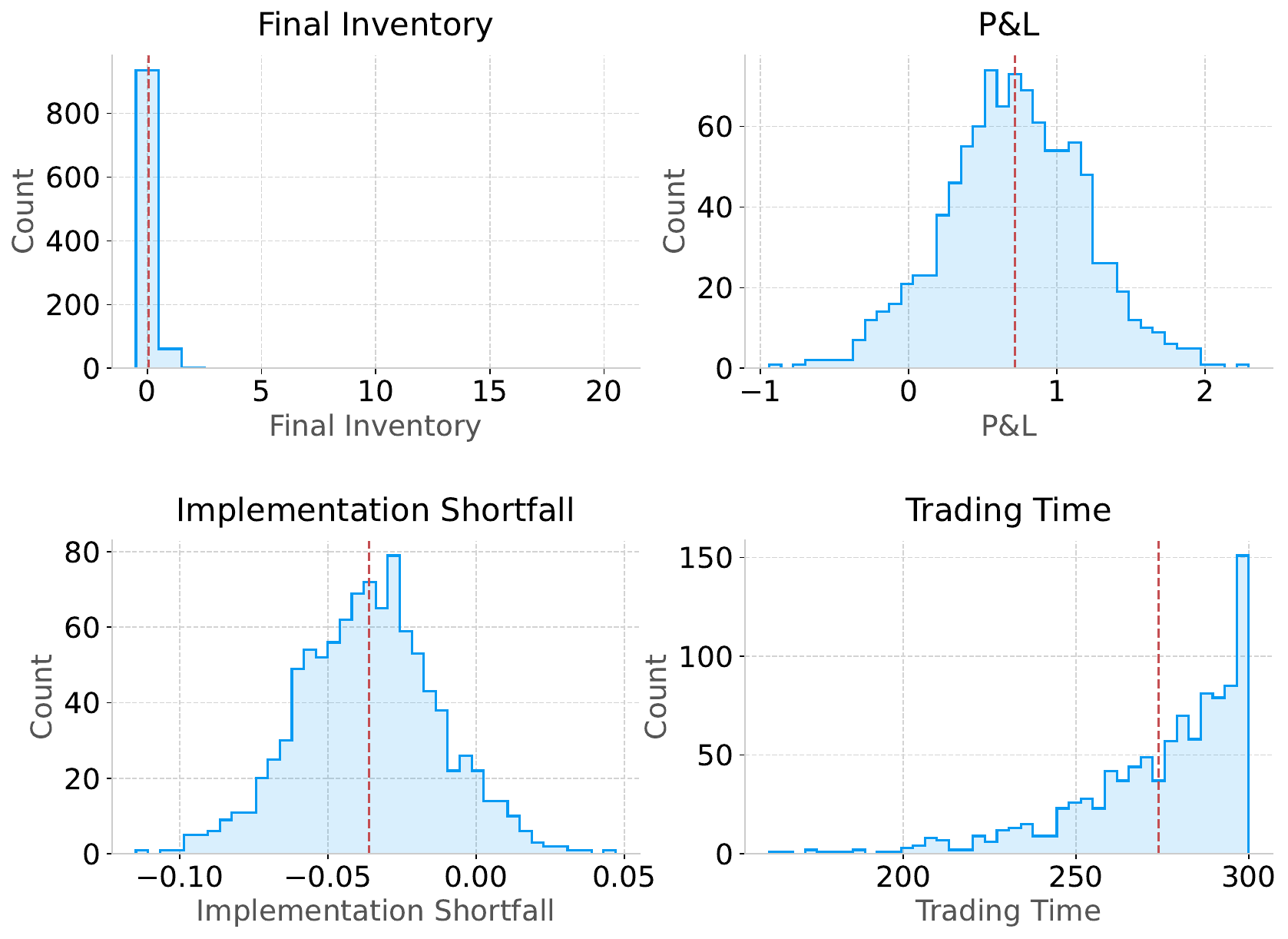}
    \caption{Monte Carlo diagnostics for final inventory, P\&L, implementation shortfall, and trading time for the parameters in Table \ref{tab:params}. Dashed red lines indicate sample means.}
    \label{fig:mc-diagnostics-baseline}
\end{figure}

Finally, Figure \ref{fig:mc-diagnostics-baseline} displays histograms for the parameter specification from Table~\ref{tab:params}. The four panels show the distributions of final inventory, P\&L, implementation shortfall, and total trading time. In each panel, the dashed red line denotes the sample means across $1000$ simulated paths with $1000$ time-steps each.

\section{Empirical evidence} \label{sec:empirics}

In this section, we estimate two different decays with respect to the distance from the midprice $\delta$ on both public US stock data and LSEG FX market data: the fill intensity decay $k$ from \eqref{eq:execution-intensity} and the MLOFI price impact decay $\ell$ from \eqref{eq:mlofi-decay}. Recall that the passive impact decay with respect to $\delta$ is given by their sum $m=k+\ell$ introduced in \eqref{eq:passive-impact}.

\subsection{NASDAQ stock data}

We estimate the decays $k$ and $\ell$ (and thereby $m$) for six US stocks traded on the NASDAQ, ranging from small-tick to large-tick stocks.

\subsubsection{Data} 

We use the same LOBSTER data set as \citet{xu2018multi} in their study of MLOFI. This choice ensures that our estimated fill intensity decay can be compared directly with the decay of their published MLOFI regression coefficients.

LOBSTER provides an event-by-event record of the NASDAQ limit order book, including visible limit order submissions, executions, cancellations, and deletions, with precise timestamps, prices, directions, and order sizes. NASDAQ is a price-time priority market with a tick size of $\$0.01$ and a lot size of one share. Following \cite{xu2018multi}, we consider six stocks with different relative tick sizes listed in increasing order: AMZN, TSLA, NFLX, ORCL, CSCO, MU. The sample covers all 252 trading days from 4 January 2016 to 30 December 2016. Weekends and public holidays are excluded, and only the continuous trading period from 10:00 to 15:30 is used, thereby removing the first and last 30 minutes of each trading day.

\subsubsection{Estimating the fill intensity decay} \label{sec:nasdaq_k_estimation}

We estimate fill intensities from the LOBSTER message and orderbook files using the same time period and intraday window as \cite{xu2018multi}. For each event, we reconstruct the pre-event best bid, best ask, midprice, and spread. We track visible limit orders from submission until their first execution, cancellation, or deletion. For an order with limit price \(p\), its current distance from the midprice in $\textrm{ticks}$ is defined by
\[
    \delta=\frac{|\textrm{mid}-p|}{\text{tick size}}.
\]
For each distance bin \(\delta\), we estimate the fill intensity in $\textrm{shares}\cdot\mathrm{s}^{-1}$ by
\be\label{eq:estimate-k}
    \hat{\Lambda}(\delta)
    =
    \frac{N_{\mathrm{fill}}(\delta)}
    {T_{\mathrm{exp}}(\delta)},
\ee
where \(N_{\mathrm{fill}}(\delta)\) is the total volume of executions occurring at current distance \(\delta\) in $\textrm{shares}$, and \(T_{\mathrm{exp}}(\delta)\) is the total exposure time accumulated by tracked visible orders while their current distance is \(\delta\) in $\mathrm{s}$. Both quantities are summed across all trading days before computing the ratio. Then, to estimate $\lambda$ and $k$ in \eqref{eq:execution-intensity}, we use the estimated fill intensity $\hat{\Lambda}(\delta)$ to fit
\be\label{eq:fit-k}
    \log {\hat\Lambda}(\delta)
    =
    \log \lambda - k\delta.
\ee
The intercept gives \(\log\hat{\lambda}\), while the negative slope gives \(\hat{k}\).

\subsubsection{Estimating the MLOFI price impact decay} \label{sec:nasdaq_l_estimation}

To estimate the decay of the MLOFI price impact across the order book, we use the MLOFI regression coefficients reported in \cite{xu2018multi}. Recalling \eqref{eq:DeltaS} and \eqref{eq:MLOFI}, the regression in \cite{xu2018multi} relates contemporaneous price changes to MLOFI at the first $\bar d$ occupied price levels,
\begin{equation}\label{eq:MLOFI-regression}
    \Delta S_{t,t+h}
    =
    \sum_{d=1}^{\bar d}
    \beta^d
    \mathrm{MLOFI}^d_{t,t+h}
    +
    \varepsilon_{t,t+h} .
\end{equation}
The authors of \cite{xu2018multi} estimate \eqref{eq:MLOFI-regression} for \(\bar d=10\) using both OLS and ridge regression. We use the ridge coefficients reported in Table 8 of \cite{xu2018multi}, because the MLOFI regressors are strongly collinear across neighbouring book levels and ridge regression produces more stable and statistically significant estimates than OLS.  

We interpret the coefficients \(\beta^d\) as level-specific price impact coefficients: a positive net MLOFI at level \(d\) is associated with a contemporaneous movement of the midprice in the same direction. Since our fill intensity estimates are expressed as a function of distance from the midprice rather than book level, we first convert each level \(d\) into an average distance \(\delta_d\). The distance of level 1 is taken to be the average half-spread, and deeper levels are obtained by adding the average gap between adjacent occupied bid and ask levels. We then fit an exponential decay of MLOFI price impact $\beta(\delta)=\xi e^{-\ell \delta}$ by
\be\label{eq:fit-l}
    \log \beta^d = \log \xi - \ell \delta_d,
    \qquad d=1,\ldots,\bar d.
\ee
The intercept gives \(\log \hat{\xi}\) and the negative slope gives \(\hat{\ell}\).

\citet{Cont2014Price} report price impact coefficients in ticks per 100 shares of OFI. In contrast, \citet{xu2018multi} report price impact coefficients in ticks per \(10000\) shares of MLOFI. We therefore divide the latter coefficients by \(100\), yielding coefficients in ticks per 100 shares of MLOFI, consistent with the convention of \cite{Cont2014Price}.

\subsubsection{NASDAQ empirical decay estimates} \label{sec:nasdaq_empirical_decay}

We now turn to the estimates of the two decay parameters. The first is the fill intensity decay \(k\), obtained from the exponential fit \eqref{eq:fit-k}.
The second is the MLOFI price impact decay \(\ell\), obtained from the exponential fit \eqref{eq:fit-l}.
We recall that for our model from Section~\ref{sec:model} the relevant combined passive impact decay rate is $m = k+\ell$.

\begin{figure}[htb]
    \centering
    \includegraphics[width=\textwidth]{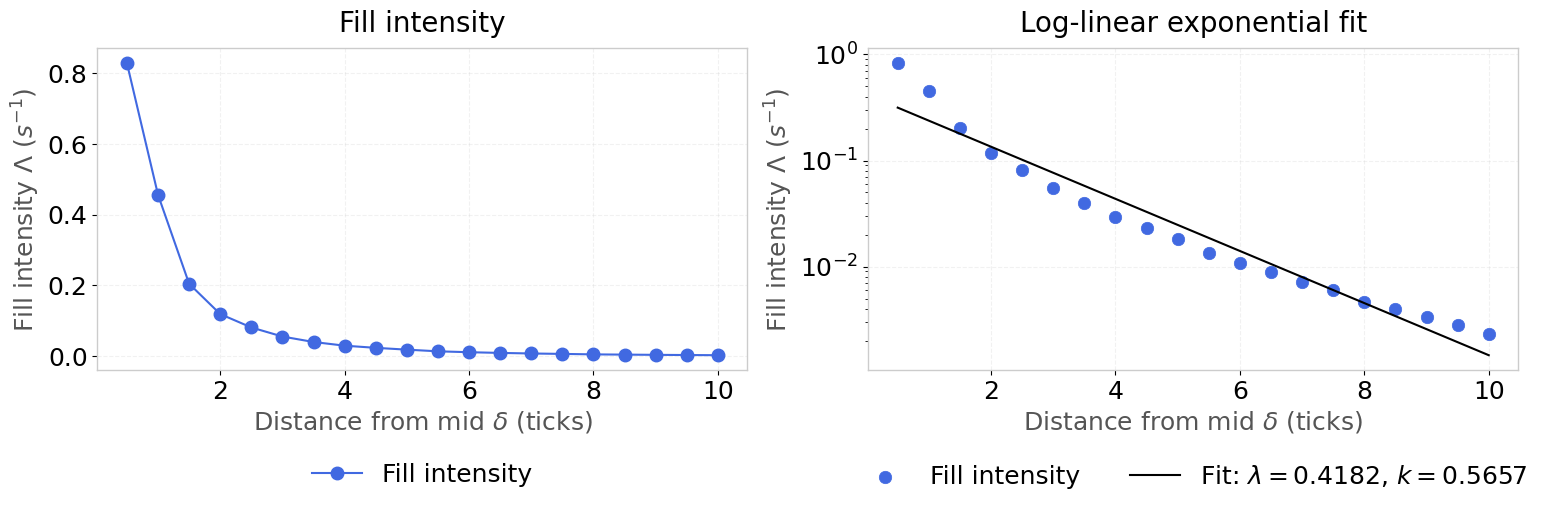}
    \caption{Fill intensity estimation for TSLA. The left panel shows the empirical one-second fill probability as a function of the current distance from the midprice $\delta$. The right panel shows the fill intensity on a logarithmic scale, together with the exponential fit \(\Lambda(\delta)=\lambda e^{-k\delta}\).}
    \label{fig:tsla-fill-decay}
\end{figure}

Figure~\ref{fig:tsla-fill-decay} shows the estimation of \(k\) for TSLA. The left panel reports the empirical fill intensity \eqref{eq:estimate-k} as a function of the current distance from the midprice, while the right panel shows the corresponding exponential fit \eqref{eq:fit-k}. The fill intensity decays rapidly and approximately exponentially with distance, giving a well-defined estimate of \(k\). Figure~\ref{fig:tsla-impact-decay} shows the corresponding estimation of \(\ell\) for TSLA. The left panel shows the MLOFI price impact coefficients from Table 8 of \cite{xu2018multi}, whereas the right panel shows the corresponding exponential fit \eqref{eq:fit-l}. The MLOFI price impact coefficients decay much more slowly than the fill intensity, which is visible directly from the flatter fit.

\begin{figure}[htb]
    \centering
    \includegraphics[width=\textwidth]{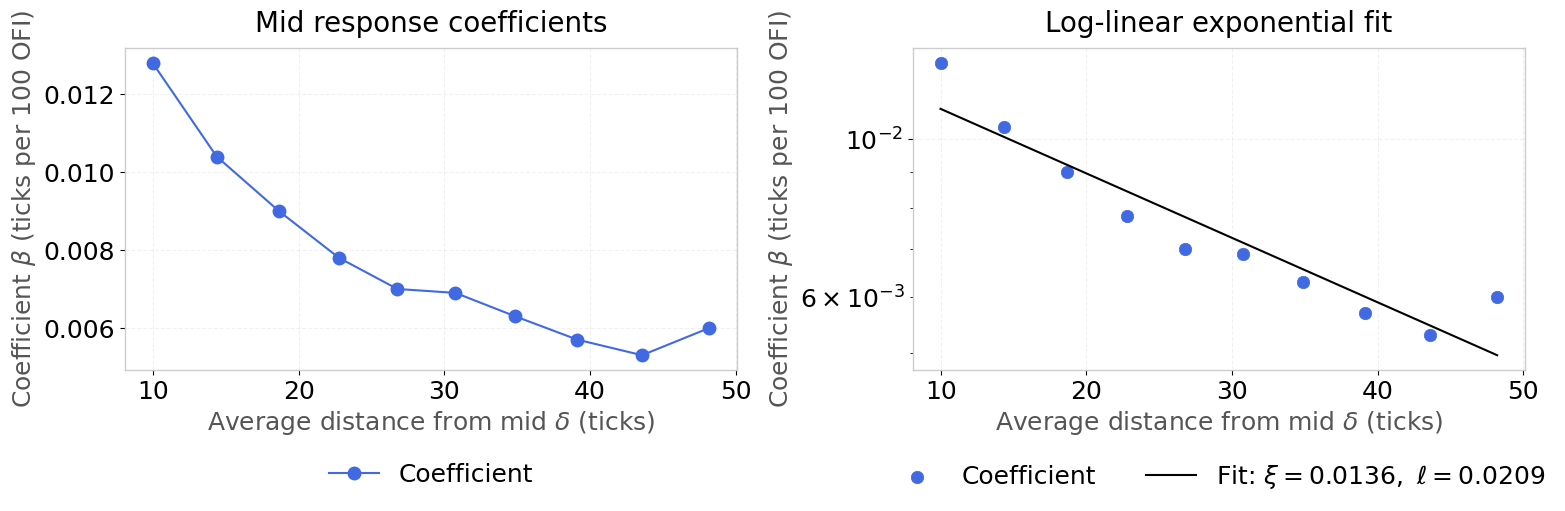}
    \caption{Passive impact estimation for TSLA. The left panel shows the MLOFI price impact coefficients as a function of the average distance from the midprice $\delta$. The right panel shows the same coefficients on a logarithmic scale, together with the exponential fit \(\beta(\delta)=\xi e^{-\ell\delta}\).}
    \label{fig:tsla-impact-decay}
\end{figure}

Table~\ref{tab:tsla-parameter-estimates} summarises the estimated parameters for all six stocks. The main empirical finding is that the MLOFI price impact decay is consistently smaller than the fill intensity decay, that is,
\begin{equation}\label{eq:l<k}
    \hat{\ell} \ll \hat{k}
\end{equation}
for all six stocks in the sample. Here, the confidence intervals are roughly one to two orders of magnitude smaller than the corresponding parameter values, but are not reported here. The fill intensity estimates produce decay rates $k$ in the range \(0.48\)--\(3.72\). By contrast, the MLOFI price impact decay is much smaller, typically close to zero and generally between approximately \(0.01\) and \(0.02\) for small-tick stocks and between approximately \(-0.10\) and \(-0.06\) for large-tick stocks. This indicates that fill probabilities decrease rapidly as a limit order is placed farther from the midprice as the order becomes more unlikely to be filled, whereas the associated MLOFI price impact coefficients are much more persistent across book levels.  This may be explained by the fact that limit orders deeper in the book still reveal information, thereby moving the price without being filled. In particular, the passive impact decay parameter $m=k+\ell$ satisfies \(m \approx k\), consistent with our model from Section~\ref{sec:model}.

\begin{table}[htb]
\begin{center}
\caption{Estimated fill intensity and passive impact parameters for US stocks.}
\label{tab:tsla-parameter-estimates}
\vskip-0.2cm
\small
\setlength{\tabcolsep}{6pt}
\renewcommand{\arraystretch}{1.15}
\setlength{\dashlinedash}{0.4pt}
\setlength{\dashlinegap}{2pt}
\arrayrulecolor{gray!60}
\makebox[\textwidth][c]{%
\begin{tabular}{c:c@{\;}l:c@{\;}l:c@{\;}l:c@{\;}l:c@{\;}l:c@{\;}l}
\toprule
& \multicolumn{2}{c}{\textbf{AMZN}} & \multicolumn{2}{c}{\textbf{TSLA}} & \multicolumn{2}{c}{\textbf{NFLX}} & \multicolumn{2}{c}{\textbf{ORCL}} & \multicolumn{2}{c}{\textbf{CSCO}} & \multicolumn{2}{c}{\textbf{MU}} \\
\midrule
\arrayrulecolor{gray!60}
$\textbf{Tick size}$                 & {} & $0.01$ & {} &  $0.01$  & {} & $0.01$ & {} & $0.01$ & {} & $0.01$ & {} & $0.01$ \\
\hdashline
\textbf{Spread (ticks)}  & {} & $36.7$  & {} & $19.2$ & {} & $4.0$     & {} & $1.2$   & {} & $1.1$   & {} & $1.1$ \\
\hdashline
$\hat{\boldsymbol{k}}$       & {} & $0.4807$ & {} & $0.5657$ & {} & $0.8584$ & {} & $1.5781$ & {} & $3.7247$ & {} & $1.6858$ \\
\hdashline
$\hat{\boldsymbol{\ell}}$    & {} & $0.0167$ & {} & $0.0209$ & {} & $0.0150$ & $-$ & $0.0650$ & $-$ & $0.1015$ & $-$ & $0.0637$ \\
\hdashline
$\hat{\boldsymbol{m}}$       & {} & $0.4974$ & {} & $0.5866$ & {} & $0.8734$ & {} & $1.5131$ & {} & $3.6232$ & {} & $1.6221$ \\
\arrayrulecolor{black}
\bottomrule
\end{tabular}%
}
\end{center}
\end{table}

The difference between the estimated decays \eqref{eq:l<k} is especially clear for the small-tick stocks AMZN, NFLX, and TSLA, for which the MLOFI coefficients display a visible decay with distance from the midprice. For the larger-tick stocks ORCL, CSCO, and MU, the coefficients reported in Table 8 of \cite{xu2018multi} do not exhibit a clean monotone decay across levels. The fitted \(\ell\) is slightly negative, but close to zero, for these stocks. The estimates of $\ell$ for the large-tick stocks should therefore be interpreted with particular caution, even though they are statistically significant. MLOFI price impact may depend more directly on the volume queued ahead of an order than on its distance from the midprice. While these quantities may be reasonable proxies for one another in small-tick stocks, the coarser price grid of large-tick stocks may prevent distance from providing a sufficiently informative substitute for queue volume. The slightly negative estimates may thus partly reflect this limitation of the model specification rather than genuine amplification of passive impact across book levels for large-tick stocks. 


\subsection{FX market data}

Having studied the empirical decays $k$ and $\ell$ in the equity market, we now estimate these parameters for five currency pairs with various tick sizes. As in the equity sample, we typically find that $\ell<k$, however, $\ell$ is generally larger relative to $k$ in the currency market.

\subsubsection{Data and estimation}

We estimate the decays $k$ and $\ell$ using LSEG quote and trade data for several FX pairs of varying liquidity with different tick sizes: USDMXN, GBPUSD, AUDUSD, USDTHB, USDSGD. The quote data contain the first $\bar d=10$ price levels on each side of the book. The FX market is a $24$-hour market with well-established intraday seasonality; we focus on the liquid window from $06:00$ to $18:00$ UTC on business days only, excluding weekends and holidays. Trade volume and quote sizes are measured in millions of base currency. The L2 order book data is conflated with at least $5$ ms between snapshots. Therefore, contrary to the equity market, detailed microstructure events are not available. The dataset included $100$ business days of history from the beginning of $2026$.

Both the fill intensity and MLOFI price impact estimates are computed using 10-second buckets. For each bucket, we record the final midprice and the average spread in ticks. We estimate the parameters $k$ and $\ell$ using the same log-linear regression procedure as in Sections \ref{sec:nasdaq_k_estimation} and \ref{sec:nasdaq_l_estimation} respectively, with separate treatment of buy and sell orders. In particular, the estimation of \(\ell\) is based on an MLOFI regression of the form \eqref{eq:MLOFI-regression}, analogous to the regression used for US stocks in \cite{xu2018multi}. For FX, the MLOFI variables are constructed from quote updates following the Cont--Kukanov--Stoikov convention and are then aggregated by book level within each bucket. The default response variable is the contemporaneous close-to-close bucket midprice change in ticks. The reported distance-to-mid impact decay is fitted to the positive unconstrained per-level MLOFI coefficients in log scale. For fill intensity, the implementation uses synthetic passive fills from public quote and trade data. A passive quote at distance $d$ from the midprice is considered eligible only when it is passive relative to the prevailing spread, and fills are proxied by next-bucket aggressor trades that reach or cross the hypothetical quote. The resulting exposure-style estimate fits $\lambda(d)=\lambda_0 e^{-kd}$ over the selected log-linear region and is the closest analogue of the LOBSTER fill intensity estimator.

\subsubsection{FX empirical decay estimates}

As in Section \ref{sec:nasdaq_empirical_decay}, we now examine the estimates of the fill intensity decay $k$ and the MLOFI price impact decay $\ell$, and again recall that the passive impact decay is $m=k+\ell$.

Figure \ref{fig:fx-fill-decay} shows the estimation of $k$ for GBPUSD. The left panel displays the empirical 10-second fill intensity as a function of the distance in ticks from the midprice, while the right panel reports the same data on a log-linear scale with the fitted exponential. Both buy and sell intensities can be seen to follow the same shape, with some noise deeper in the book. As for TSLA in Figure \ref{fig:tsla-fill-decay}, the fill intensities decay very rapidly, since limit orders placed farther from the midprice become increasingly unlikely to be reached and executed. Figure \ref{fig:fx-impact-decay} shows the estimation of $\ell$ for GBPUSD, and we see that as for TSLA in Figure \ref{fig:tsla-impact-decay} the impact decays more slowly than the fill probability (that is, $\hat\ell<\hat k$).

\begin{figure}[H]
    \centering
    \includegraphics[width=\textwidth]{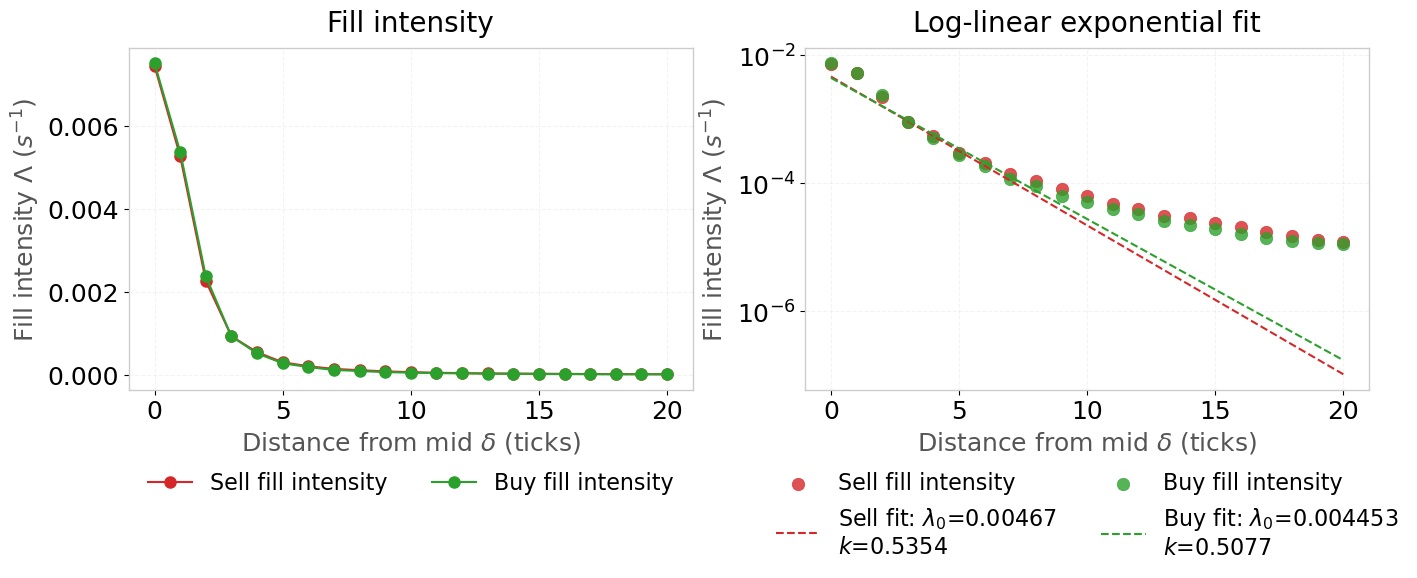}

    \caption{Fill intensity estimation for GBPUSD. The left panel shows the empirical proxied $10$-second fill intensity as a function of the current distance from the midprice $\delta$. The right panel shows the fill intensity on a logarithmic scale, together with the exponential fit \(\Lambda(\delta)=\lambda e^{-k\delta}\).}
    \label{fig:fx-fill-decay}
\end{figure}

\begin{figure}[H]
    \centering
    \includegraphics[width=\textwidth]{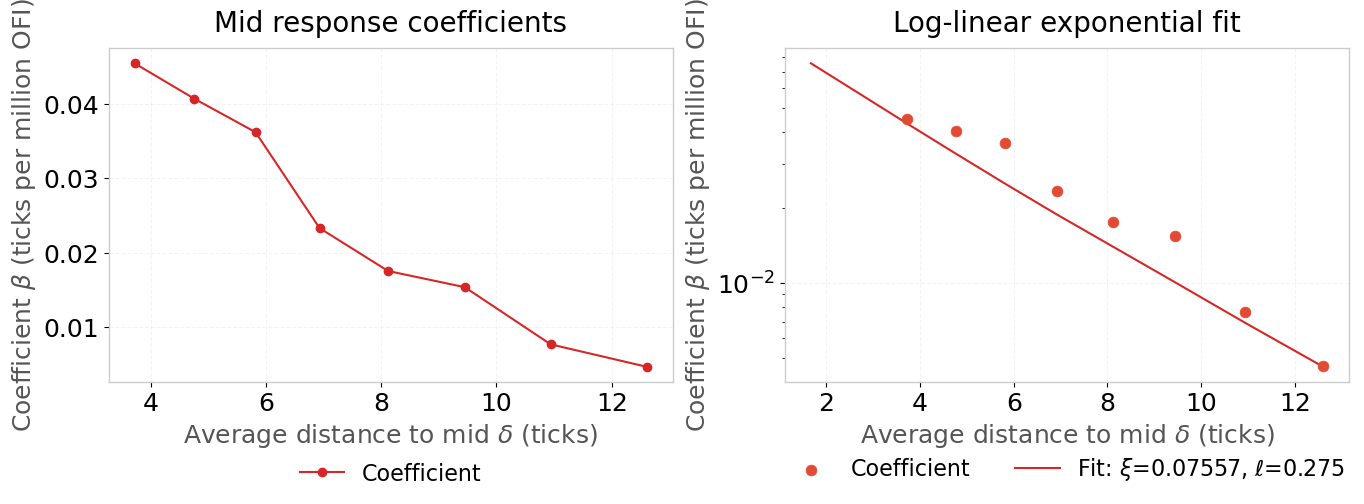}

    \caption{Passive impact estimation for GBPUSD. The left panel shows the MLOFI price impact coefficients as a function of the average distance from the midprice $\delta$. The right panel shows the same coefficients on a logarithmic scale, together with the exponential fit $\xi e^{-\ell\delta}$.}
    \label{fig:fx-impact-decay}
\end{figure}

Table \ref{tab:fx-parameter-estimates} reports the estimated parameters, along with tick sizes and average spreads, for the five currency pairs. These pairs span both major and emerging market currencies, as well as different tick sizes. For FX the MLOFI price impact decay $\ell$ relative to the fill intensity decay $k$ is found to be larger than for equities in Table \ref{tab:tsla-parameter-estimates}. The finding that $m \approx k$ for equities that justified our model is therefore not replicated as strongly in the FX market, especially for some pairs. This comparison should be interpreted as a public-data, venue-level result: the estimates use LSEG L2 quote and trade data only, not dealer OTC flow. One possible explanation is that public venue-level information captures only part of the total FX liquidity and information set, whereas the public equity order book is a more central source of information for the corresponding stocks. This evidence indicates that allowing the passive impact and fill intensity to decay at distinct rates, $m\neq k$, is important for capturing market-specific differences between equities and FX. Accordingly, Section~\ref{sec:different} develops and analyses this more general specification, which broadens the applicability of our framework beyond the benchmark case $m\approx k$.

\begin{table}[H]
\begin{center}
\caption{Estimated fill intensity and passive impact parameters for FX pairs.}
\label{tab:fx-parameter-estimates}
\vskip-0.2cm
\small
\setlength{\tabcolsep}{6pt}
\renewcommand{\arraystretch}{1.15}
\setlength{\dashlinedash}{0.4pt}
\setlength{\dashlinegap}{2pt}
\arrayrulecolor{gray!60}
\makebox[\textwidth][c]{%
\begin{tabular}{c:c@{\;}l:c@{\;}l:c@{\;}l:c@{\;}l:c@{\;}l}
\toprule
& \multicolumn{2}{c}{\textbf{USDMXN}}
& \multicolumn{2}{c}{\textbf{GBPUSD}}
& \multicolumn{2}{c}{\textbf{AUDUSD}}
& \multicolumn{2}{c}{\textbf{USDTHB}}
& \multicolumn{2}{c}{\textbf{USDSGD}} \\
\midrule
\arrayrulecolor{gray!60}
$\textbf{Tick size}$      & {} & $0.001$ & {} & $0.00005$ & {} & $0.00005$ & {} & $0.005$ & {} & $0.0001$ \\
\hdashline
\textbf{Spread (ticks)}   & {} & $4.4$ & {} & $2.7$ & {} & $2.3$ & {} & $2.1$ & {} & $1.5$ \\
\hdashline
$\hat{\boldsymbol{k}}$    & {} & $0.52$ & {} & $0.49$ & {} & $0.54$ & {} & $0.62$ & {} & $0.61$ \\
\hdashline
$\hat{\boldsymbol{\ell}}$ & {} & $0.23$ & {} & $0.27$ & {} & $0.13$ & {} & $0.33$ & {} & $0.26$ \\
\hdashline
$\hat{\boldsymbol{m}}$    & {} & $0.75$ & {} & $0.76$ & {} & $0.67$ & {} & $0.95$ & {} & $0.87$ \\
\arrayrulecolor{black}
\bottomrule
\end{tabular}%
}
\end{center}
\end{table}

\section{Extensions} \label{sec:extensions}

In this section, we consider three extensions of the model introduced in Section~\ref{sec:model}. The first allows for different decay rates \(m\neq k\), the second addresses more general transient passive impact instead of purely permanent passive impact, and the third studies target execution schedules, deviations from which are penalised. The results of this section are proved in Section~\ref{sec:proofs-extensions}.

\subsection{Different exponential decay rates} \label{sec:different}

The explicit solution in Theorem~\ref{thm:main} relies on the fact that the fill intensity and the passive impact term have the same exponential decay in the quote distance $\delta$. In light of our empirical findings from Table~\ref{tab:fx-parameter-estimates}, it is natural to ask what happens if this assumption is relaxed. 

Suppose therefore that executions still arrive with intensity $\lambda e^{-k\delta}$ as in \eqref{eq:execution-intensity-model}, but that the passive impact term $\eta e^{-m \delta}$ in \eqref{eq:S} decays at a possibly different exponential rate, i.e, suppose \(m\neq k\). The associated HJB equation \eqref{eq:hjb} is then modified only at the impact term, that is,
\be
\begin{aligned}\label{eq:hjb-diffdecay}
    0
    =
    &\frac{\partial  u}{\partial t}(t, x, q, s)
    +\frac12\sigma^2\frac{\partial^{2} u}{\partial s^{2}}(t, x, q, s)
    -\phi q^2\\
    &+
    \mathbf 1_{\{q>0\}}
    \sup_{\delta}
    \bigg\{
        \lambda e^{-k\delta}
        \big(
             u(t,x+s+\delta,q-1,s)- u(t,x,q,s)
        \big)
        -
        \eta e^{-m\delta}\frac{\partial u}{\partial s}(t,x,q,s)
    \bigg\},
\end{aligned}
\ee
with the same boundary and terminal conditions
\be
     u(t,x,0,s) = x, 
     \qquad
     u(T,x,q,s) = x + qs - \alpha q^{2}.
\ee
Let \(W_0\) denote the principal real branch of the Lambert \(W\) function, defined by
\begin{equation}\label{eq:LambertW}
    W_0(z)e^{W_0(z)}=z,
    \qquad z\in[-e^{-1},\infty),
\end{equation}
with \(W_0(z)\geq -1\).
We also define the auxiliary function $\Phi_q:\R\to\R$ by
\begin{equation}
\label{eq:Phi-q-def}
\Phi_q(z)
:=
W_0\bigg(
    \frac{m-k}{k}\frac{\eta}{\lambda} q m
    \exp\bigg(
        -\frac{m-k}{k}
        (1+kz)
    \bigg)
\bigg).
\end{equation}

\begin{theorem}\label{thm:decay}
Assume either that $m>k$ or that $k-\eps< m<k$ for a sufficiently small $\eps>0$. For $q=1,...,q_{0}$, the optimal quote is given by
\begin{equation}
\label{eq:optimal-quote-diffdecay}
\delta^\star(t,q)
    =
    \frac1k+\theta(t,q)-\theta(t,q-1)
    +
    \frac{1}{m-k}
    \Phi_q\big(\theta(t,q)-\theta(t,q-1)\big)
\end{equation}
where \(\theta(t,q)\) solves the nonlinear triangular system
\begin{equation}
\label{eq:theta-diffdecay}
\begin{aligned}
    \frac{\partial\theta}{\partial t}(t,q)
    =
    \phi q^2 
    -&
    \lambda \bigg(
        \frac1k
        +
        \frac{1}{m}
        \Phi_q\big(\theta(t,q)-\theta(t,q-1)\big)
    \bigg)
     \\
    &\times
    \exp\bigg(
        -1
        -k\big(\theta(t,q)-\theta(t,q-1)\big)
        -
        \frac{k}{m-k}
        \Phi_q\big(\theta(t,q)-\theta(t,q-1)\big)
    \bigg)
\end{aligned}
\end{equation}
with boundary and terminal conditions
\be\label{eq:boundary-terminal-diffdecay}
    \theta(t,0) = 0,
    \qquad 
    \theta(T,q)=-\alpha q^2.
\ee
\end{theorem}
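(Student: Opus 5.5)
The plan is to reduce the HJB equation \eqref{eq:hjb-diffdecay} to a triangular ODE system through the same linear ansatz used for Theorem~\ref{thm:main}, and then check it with a verification argument. I will look for a solution of the form
\[
u(t,x,q,s)=x+qs+\theta(t,q),\qquad \theta(t,0)=0,\quad \theta(T,q)=-\alpha q^2 .
\]
This ansatz matches the boundary and terminal conditions. It also gives $\partial_s u=q$ and $\partial_{ss}u=0$, so the volatility term drops out. The jump increment becomes $u(t,x+s+\delta,q-1,s)-u(t,x,q,s)=\delta-\Delta$, where $\Delta:=\theta(t,q)-\theta(t,q-1)$. The HJB equation therefore reduces to
\[
\partial_t\theta(t,q)-\phi q^2+\sup_{\delta}\Big\{\lambda e^{-k\delta}(\delta-\Delta)-\eta q e^{-m\delta}\Big\}=0 .
\]

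The next step is the pointwise maximisation, which is where the Lambert function enters. Write $\delta=\Delta+\tfrac1k+y$. The first-order condition $\lambda e^{-k\delta}\big(1-k(\delta-\Delta)\big)+m\eta q e^{-m\delta}=0$ then becomes
\[
y=\frac{m\eta q}{k\lambda}\,e^{-(m-k)\delta}.
\]
Substituting $\delta$ and multiplying by $(m-k)$ gives
\[
(m-k)y\,e^{(m-k)y}=\frac{m-k}{k}\frac{\eta}{\lambda}qm\exp\Big(-\frac{m-k}{k}(1+k\Delta)\Big).
\]
Hence $(m-k)y=\Phi_q(\Delta)$, which is exactly \eqref{eq:optimal-quote-diffdecay}. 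To get the optimal value, use the first-order condition in the form $\eta q e^{-m\delta}=\tfrac{k\lambda}{m}ye^{-k\delta}$. The supremum then equals
\[
\lambda e^{-k\delta^\star}\Big(\tfrac1k+\tfrac{\Phi_q(\Delta)}{m}\Big),\qquad e^{-k\delta^\star}=\exp\Big(-1-k\Delta-\tfrac{k}{m-k}\Phi_q(\Delta)\Big).
\]
Plugging this into the reduced HJB equation yields \eqref{eq:theta-diffdecay}.

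The main obstacle is justifying the branch choice and showing that the critical point is the global maximiser.

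For $m>k$, the argument of $W_0$ is positive, so $ye^{(m-k)y}$ is strictly increasing and the critical point is unique. The objective tends to $-\infty$ as $\delta\to-\infty$, because the impact term $-\eta q e^{-m\delta}$ dominates since $m>k$. It tends to $0^+$ as $\delta\to\infty$. So the critical point is a global maximum, and its value is positive.

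For $m<k$, the argument is negative, and a solution exists only if it is at least $-e^{-1}$. This is where $m\in(k-\eps,k)$ will be used. The argument is bounded by $\tfrac{k-m}{k}\tfrac{\eta}{\lambda}q_0 m\, e^{\frac{k-m}{k}(1+k\Delta)}$, which is small provided $\Delta$ stays in a bounded range. Selecting the principal branch $W_0$ then picks the root with $y\ge0$ near the $\eta=0$ solution, which is the local maximum. The other root, on the $W_{-1}$ branch, is a local minimum at very large $\delta$. Beyond that point the objective tends to $-\infty$, because the impact term now decays more slowly than the fill term. To compare with that far region I would use the asymptotics: the $W_0$ critical value is positive, while the objective beyond the $W_{-1}$ root is negative.

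I therefore need a priori bounds on $\Delta$. I would get them by solving \eqref{eq:theta-diffdecay} backward and recursively in $q$. Since $\theta(\cdot,q-1)$ is already known, each equation is a scalar ODE in $\theta(\cdot,q)$ with a locally Lipschitz right-hand side. The right-hand side is bounded above by $\phi q^2$, and the supremum term is nonnegative, since $\delta\to\infty$ gives the value $0$. Together these give two-sided bounds on $\theta(t,q)$ over $[0,T]$, hence a bound on $\Delta$ that does not depend on $\eps$. This bound fixes $\eps$ and rules out blow-up.

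Finally, I would run a standard verification argument. The candidate $\delta^\star(t,q)$ is continuous and deterministic given $(t,q)$, and it takes finitely many values of $q$, so it is bounded below and belongs to $\mathcal A$. Applying Itô's formula for jump processes to $u(t,X_t,Q_t,S_t)$ and compensating $dN$ with its intensity, the HJB inequality gives $u\ge$ the objective for every admissible $\delta$, with equality at $\delta^\star$. Integrability follows because $Q$ is bounded, $\delta\ge C$, and the intensities are bounded under admissible controls.
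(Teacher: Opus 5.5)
Your route is the paper's: the ansatz $u=x+qs+\theta(t,q)$, the first-order condition rewritten in Lambert form, the optimal value $\lambda e^{-k\delta^\star}\big(\frac1k+\frac1m\Phi_q(\Delta)\big)$, and for $m<k$ selection of the $W_0$ root by comparing its value with the behaviour of $H_q$ for large $\delta$. Your computations are correct. Three of your additions go beyond what the paper writes:
\begin{itemize}
\item the global-maximality argument for $m>k$, where the paper only solves the first-order condition;
\item the a priori bound on $\Delta$, which the paper's claim that the Lambert argument lies in $(-e^{-1},0)$ for $m$ close to $k$ tacitly needs (uniformly in $m$);
\item the verification step.
\end{itemize}

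One step in the case $m<k$ is wrong. Beyond the $W_{-1}$ root the objective does not tend to $-\infty$: since $m>0$, both terms vanish and $H_q(\delta)\to0^-$. You use this very limit yourself when you say the supremum is nonnegative. A limit of $-\infty$ after a local minimum would also force a third critical point, which does not exist.

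The correct comparison goes as follows. Write $z\in(-e^{-1},0)$ for the Lambert argument. The first-order condition has exactly two roots $\delta_0<\delta_{-1}$. $H_q$ increases from $-\infty$ on $(-\infty,\delta_0)$ and increases to $0$ on $(\delta_{-1},\infty)$, so it is negative on the latter interval. Hence $\delta_0$ is the global maximiser as soon as $H_q(\delta_0)>0$, which by your value formula means $\Phi_q(\Delta)>-m/k$. This inequality, not merely $z\ge-e^{-1}$, is what smallness of $z$ must deliver: if $-1<\Phi_q(\Delta)<-m/k$, the supremum is $0$ and is not attained.

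Two smaller points. Both roots have $y>0$, because the first-order condition forces it; the $W_0$ root is the one with $(k-m)y<1$, not the one with $y\ge0$. The local-maximum claim you assert can be checked directly, since at a critical point $\partial^2_\delta H_q=-k\lambda e^{-k\delta}\big(1+W_\ell(z)\big)$.

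Your a priori bound is one-sided as stated. ``Right-hand side $\le\phi q^2$'' and ``supremum $\ge0$'' are the same fact, and they yield only $\theta(t,q)\ge-\alpha q^2-\phi q^2T$. You need an upper bound on $\Delta$ to make $z$ small, and for that you also need:
\begin{itemize}
\item the estimate $\sup_\delta H_q(\delta;\Delta)\le\frac{\lambda}{ek}e^{-k\Delta}$, obtained by dropping the impact term;
\item induction on $q$: the lower bound on $\theta(\cdot,q)$ and the upper bound on $\theta(\cdot,q-1)$ control $e^{-k\Delta}$, hence $\theta(\cdot,q)$ from above, hence $\Delta$ from above.
\end{itemize}
Run these estimates on the equation written with $\sup_\delta H_q$. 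That map is finite and convex in $\Delta$, hence locally Lipschitz, and all resulting bounds are independent of $m$. Only afterwards should you choose $\eps$ and identify the equation with its Lambert form. For $m<k$ the Lambert form is undefined wherever the argument falls below $-e^{-1}$, so it cannot be used to derive the bounds.
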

Theorem \ref{thm:decay} proves that different exponential decay rates still lead to a semi-explicit optimal quote \eqref{eq:optimal-quote-diffdecay} through the Lambert \(W\) function, but the value functions now solve the nonlinear triangular ODE system \eqref{eq:theta-diffdecay}--\eqref{eq:boundary-terminal-diffdecay}. Hence the closed-form solution for \(\theta\), available in the equal-decay case \(m=k\), is lost.

\begin{remark}
The case \(m=k\) is recovered from Theorem~\ref{thm:decay} by taking the limit \(m\to k\). Indeed, using \eqref{eq:Phi-q-def} and the fact that $W_0'(0)=1$, the last term on the right-hand side of \eqref{eq:optimal-quote-diffdecay} converges to \(\frac{\eta}{\lambda} q\), and one obtains
\[
    \delta^\star(t,q)
    =
    \frac1k
    +
    \theta(t,q)-\theta(t,q-1)
    +
    \frac{\eta}{\lambda} q,
\]
which is precisely the optimal quote in the main model.
\end{remark}

\begin{figure}[htb]
    \centering
    \includegraphics[width=0.7\textwidth]{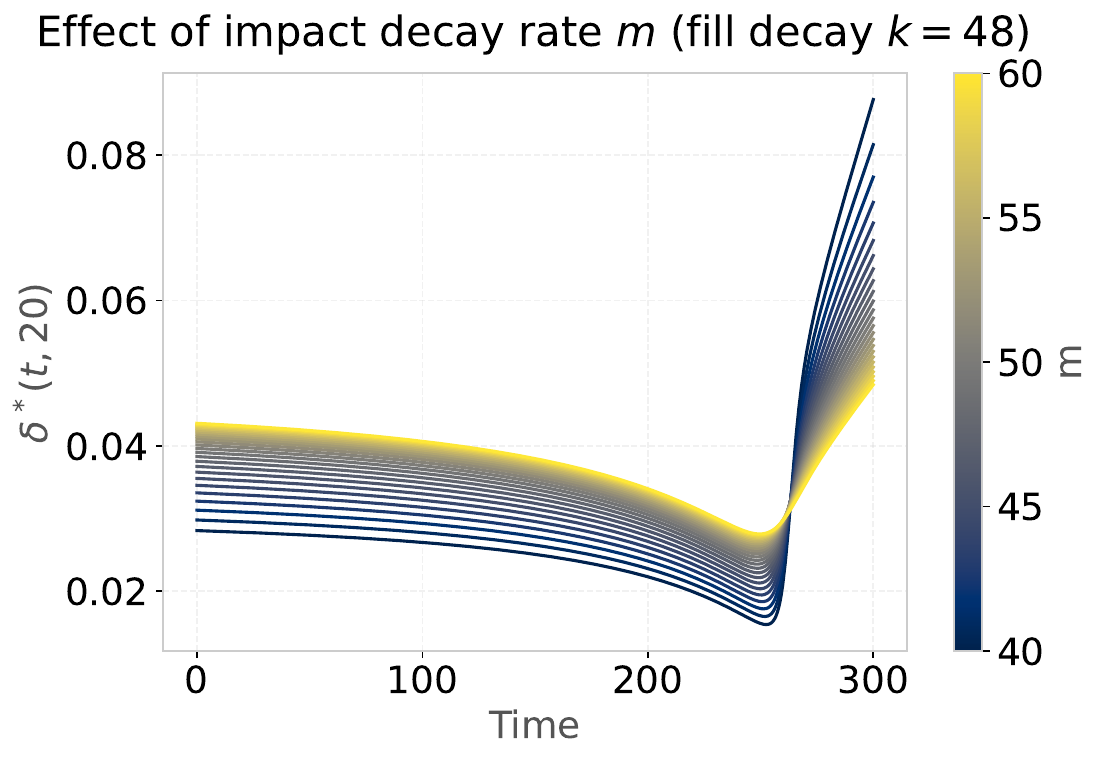}
    \caption{Effect of the passive impact decay parameter $m$ on the optimal quote depth $\delta^{\star}(t, 20)$.}
    \label{fig:compare_m}
\end{figure}

Figure \ref{fig:compare_m} compares the optimal quote depth $\delta^{\star}(t, 20)$ across different values of the passive impact decay parameter $m$ using the parameters from Table~\ref{tab:params}, with darker blue colours indicating slower decay as quote depth increases. We see that as the impact decays more quickly, the trader is inclined to post further from the mid to incur less impact. On average, this leads to larger profits per fill, but fewer fills, which explains the more aggressive posting towards the end of the time window. 

Similar to Table \ref{tab:eta_summary}, we also show summary statistics of $1000$ Monte Carlo simulations in Table \ref{tab:m_summary} as $m$ is varied. Since $k=48$, the table shows results for $m<k$, $m\approx k$, and $m>k$. When the sensitivity $m$ is higher, the trader is incentivised to post deeper in the book to accrue less impact. Therefore, more spread is made per trade and terminal inventory is valued at a better price due to the lower impact, and this increases P\&L. However, this comes at a cost of reduced fills, leading final inventory and trading time to both be higher when $m>k$ than they would otherwise be. 

\begin{table}[H]
\begin{center}
\caption{Summary statistics by $m$.}\label{tab:m_summary}
\vskip-0.2cm
\small
\setlength{\tabcolsep}{6pt}
\renewcommand{\arraystretch}{1.15}
\setlength{\dashlinedash}{0.4pt}
\setlength{\dashlinegap}{2pt}
\arrayrulecolor{gray!60}
\makebox[\textwidth][c]{%
\begin{tabular}{l:c:c:c:c}
\toprule
\textbf{$m$}
& \begin{tabular}{@{}c@{}}\textbf{Final inventory}\\ \textbf{(thousand shares)}\end{tabular}
& \begin{tabular}{@{}c@{}}\textbf{P\&L}\\ \textbf{(thousand \$)}\end{tabular}
& \begin{tabular}{@{}c@{}}\textbf{Implementation shortfall}\\ \textbf{(thousand \$)}\end{tabular}
& \begin{tabular}{@{}c@{}}\textbf{Trading time}\\ \textbf{(s)}\end{tabular} \\
\midrule
\arrayrulecolor{gray!60}

$\mathbf{40}$
& \begin{tabular}{@{}c@{}}$0.044$\\$(0.031,\ 0.057)$\end{tabular}
& \begin{tabular}{@{}c@{}}$0.469$\\$(0.442,\ 0.496)$\end{tabular}
& \begin{tabular}{@{}c@{}}$-0.024$\\$(-0.025,\ -0.022)$\end{tabular}
& \begin{tabular}{@{}c@{}}$266.987$\\$(265.213,\ 268.762)$\end{tabular}
\\
\hdashline

$\mathbf{50}$
& \begin{tabular}{@{}c@{}}$0.071$\\$(0.054,\ 0.088)$\end{tabular}
& \begin{tabular}{@{}c@{}}$0.766$\\$(0.737,\ 0.796)$\end{tabular}
& \begin{tabular}{@{}c@{}}$-0.039$\\$(-0.040,\ -0.037)$\end{tabular}
& \begin{tabular}{@{}c@{}}$275.043$\\$(273.607,\ 276.480)$\end{tabular}
\\
\hdashline

$\mathbf{60}$
& \begin{tabular}{@{}c@{}}$0.081$\\$(0.063,\ 0.099)$\end{tabular}
& \begin{tabular}{@{}c@{}}$0.939$\\$(0.908,\ 0.969)$\end{tabular}
& \begin{tabular}{@{}c@{}}$-0.047$\\$(-0.049,\ -0.046)$\end{tabular}
& \begin{tabular}{@{}c@{}}$276.956$\\$(275.618,\ 278.295)$\end{tabular}
\\

\arrayrulecolor{black}
\bottomrule
\end{tabular}%
}
\end{center}
\end{table}

\subsection{Transient passive impact} \label{sec:transient}

Our main model in Section \ref{sec:model} treats passive impact as permanent: passive pressure generated at time \(t\) affects the midprice level thereafter. A natural extension is to let this impact decay over time as in \citet{Obizhaeva2013Optimal}. To model this, we introduce a transient impact state \((I_t)_{t\in [0,T]}\) which is given by
\begin{equation}
    I_t
    =
    -\rho \int_{0}^{t}I_u\,du
    +
    \eta  \int_{0}^{t}e^{-k\delta_u}\mathbf 1_{\{Q_{u-}>0\}}\,du.
\end{equation}
Here \(\rho\) is the resilience rate of passive impact. Larger \(\rho\) means that passive pressure decays more quickly. Then, instead of \eqref{eq:S}, the observed midprice satisfies
\[
    S_t
    = S_0 + 
    \sigma\,W_t
    +\rho\int_{0}^{t} I_u\,du
    -\eta  \int_{0}^{t}e^{-k\delta_u}\mathbf 1_{\{Q_{u-}>0\}}\,du.
\]
Thus, the value function is of the form $u(t,x,q,s,i)$, defined by
\begin{equation}\label{eq:optimisation-problem-transient}
     u(t,x,q,s, i)
    :=
    \sup_{\delta \in \mathcal A}
    \mathbb E_{t,x,q,s,i}
    \bigg[
        X_T + Q_T S_T 
        - \phi \int_t^T Q_u^2\,du
        - \alpha Q_T^{2}
    \bigg],
\end{equation}
and the HJB equation \eqref{eq:hjb} becomes
\begin{equation}
\label{eq:hjb-transient}
\begin{aligned}
    0
    =
    &\frac{\partial u}{\partial t}(t, x, q, s, i)
    +\frac12\sigma^2\frac{\partial^{2} u}{\partial s^{2}}(t, x, q, s, i)
    +
    \rho i\,\frac{\partial u}{\partial s}(t, x, q, s, i)
    -
    \rho i\,\frac{\partial u}{\partial i}(t, x, q, s, i)
    -
    \phi q^2\\
    &+
    \mathbf 1_{\{q>0\}}
    \sup_{\delta}
    \bigg\{
        \lambda e^{-k\delta}
        \big(
             u(t,x+s+\delta,q-1,s,i)
            -
             u(t,x,q,s,i)
        \big) \\
    &\qquad\qquad\qquad\ 
        -
         \eta e^{-k\delta}
        \frac{\partial u}{\partial s}(t, x, q, s, i)
        +
        \eta e^{-k\delta}
        \frac{\partial u}{\partial i}(t, x, q, s, i)
    \bigg\}.
\end{aligned}
\end{equation}

\begin{theorem}\label{thm:transient}
Let $m=k$. For \(q=1,\dots,q_0\), the optimal quote is
\begin{equation}
\label{eq:optimal-quote-transient}
    \delta^\star(t,q,i)
    =
    \frac1k
    +
    \psi(t,q,i)-\psi(t,q-1,i)
    +
    \frac{\eta}{\lambda} q
    -
    \frac{\eta}{\lambda}\frac{\partial\psi}{\partial i}(t,q,i),
\end{equation}
where \(\psi(t,q,i)\) solves the nonlinear system
\begin{equation}
\label{eq:hjb-transient-optimised}
\begin{aligned}
    0
    =
    &\frac{\partial\psi}{\partial t}(t,q,i)
    +
    \rho i\bigg(q-\frac{\partial\psi}{\partial i}(t,q,i)\bigg)
    -
    \phi q^2\\
    &+
    \mathbf 1_{\{q>0\}}
    \frac{e^{-1}\lambda}{k}
    \exp\bigg(
        -k\bigg(
            \psi(t,q,i)-\psi(t,q-1,i)
            +
            \frac{\eta}{\lambda} q
            -
            \frac{\eta}{\lambda}\frac{\partial\psi}{\partial i}(t,q,i)
        \bigg)
    \bigg)
\end{aligned}
\end{equation}
with boundary and terminal conditions
\begin{equation}\label{eq:boundary-terminal-transient}
    \psi(t,0,i)=0,
    \qquad
    \psi(T,q,i)=-\alpha q^2.
\end{equation}
\end{theorem}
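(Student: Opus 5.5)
We prove Theorem~\ref{thm:transient} by a verification-style ansatz in the spirit of the proof of Theorem~\ref{thm:main}. We look for a solution of \eqref{eq:hjb-transient} of the form
\[
u(t,x,q,s,i)=x+qs+\psi(t,q,i),
\]
with $\psi(t,0,i)=0$ and $\psi(T,q,i)=-\alpha q^2$. These choices reproduce the boundary condition $u(t,x,0,s,i)=x$ and the terminal condition $u(T,x,q,s,i)=x+qs-\alpha q^2$. Substituting the ansatz, we have $\partial_s u=q$, $\partial_s^2u=0$ and $\partial_i u=\partial_i\psi$. Moreover, $u(t,x+s+\delta,q-1,s,i)-u(t,x,q,s,i)=\delta+\psi(t,q-1,i)-\psi(t,q,i)$, since the $s$ terms cancel exactly. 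The linear part of \eqref{eq:hjb-transient} then becomes $\partial_t\psi+\rho i q-\rho i\,\partial_i\psi-\phi q^2$, which matches the first line of \eqref{eq:hjb-transient-optimised}.

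For the Hamiltonian we factor out $e^{-k\delta}$, as in the equal-decay case. The term inside the supremum becomes
\[
\lambda e^{-k\delta}\Big(\delta-\Delta\psi-\tfrac{\eta}{\lambda}q+\tfrac{\eta}{\lambda}\partial_i\psi\Big),\qquad \Delta\psi:=\psi(t,q,i)-\psi(t,q-1,i).
\]
Writing $c:=\Delta\psi+\frac{\eta}{\lambda}q-\frac{\eta}{\lambda}\partial_i\psi$, we must maximise $\lambda e^{-k\delta}(\delta-c)$ over $\delta\in\R$. This function is smooth, tends to $-\infty$ as $\delta\to-\infty$ and to $0$ as $\delta\to\infty$. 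The first-order condition $-k(\delta-c)+1=0$ has the unique solution $\delta^\star=c+1/k$, which is exactly \eqref{eq:optimal-quote-transient}. The maximum value is $\frac{\lambda}{k}e^{-1}e^{-kc}$, and inserting it gives precisely \eqref{eq:hjb-transient-optimised}. So $u$ solves the HJB equation \eqref{eq:hjb-transient} if and only if $\psi$ solves \eqref{eq:hjb-transient-optimised} with \eqref{eq:boundary-terminal-transient}.

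The main obstacle is the justification that the pointwise maximiser is optimal for \eqref{eq:optimisation-problem-transient}, that is, a verification argument. We would assume that the system has a solution $\psi$ that is $C^1$ in $(t,i)$ with $\partial_i\psi$ bounded on the relevant range of $i$. The process $I$ stays bounded on $[0,T]$ because the impact rate is bounded for controls bounded below. Then $\delta^\star$ is bounded below and therefore admissible in \eqref{eq:A}.

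For an arbitrary $\delta\in\mathcal A$ we would apply It\^o's formula for jump processes to $u(t,X_t,Q_t,S_t,I_t)$. The supremum in the HJB equation makes the drift nonpositive, with equality for $\delta^\star$. The local martingale terms are $\sigma q\,dW$ and the compensated jump part. These are true martingales because $Q$ is bounded by $q_0$, the intensity $\lambda e^{-k\delta}$ is bounded for admissible $\delta$, and $\psi$ is bounded on the compact range of $(t,q,i)$. Taking expectations then yields $u\ge$ the objective, with equality at $\delta^\star$.

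If existence and regularity of $\psi$ are needed, we would note that the system is triangular in $q$. For each $q$ it is a first-order semilinear PDE in $(t,i)$ with characteristics $\dot i=-\rho i$, and a local Lipschitz nonlinearity in $\partial_i\psi$. We would invoke standard results, or simply take the existence of $\psi$ as part of the hypothesis, as the statement implicitly does.
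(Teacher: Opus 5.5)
Your proposal is correct and follows the paper's proof. It uses the same ansatz $u=x+qs+\psi(t,q,i)$ and reduces the Hamiltonian to maximising $\lambda e^{-k\delta}(\delta-c)$ with $c=\psi(t,q,i)-\psi(t,q-1,i)+\frac{\eta}{\lambda}q-\frac{\eta}{\lambda}\frac{\partial\psi}{\partial i}(t,q,i)$, then substitutes $\delta^\star=c+1/k$. The verification sketch you add, under assumed regularity and boundedness of $\psi$ and $\frac{\partial\psi}{\partial i}$, goes beyond the paper, which only performs this formal HJB reduction.
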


The system \eqref{eq:hjb-transient-optimised} is substantially less tractable than in the permanent-impact case. The exponential transform used before no longer yields a triangular linear system of ODEs. Instead, one obtains a nonlinear first-order PDE in the additional state variable \(I\).

\begin{figure}[htb]
    \centering
    \includegraphics[width=0.7\textwidth]{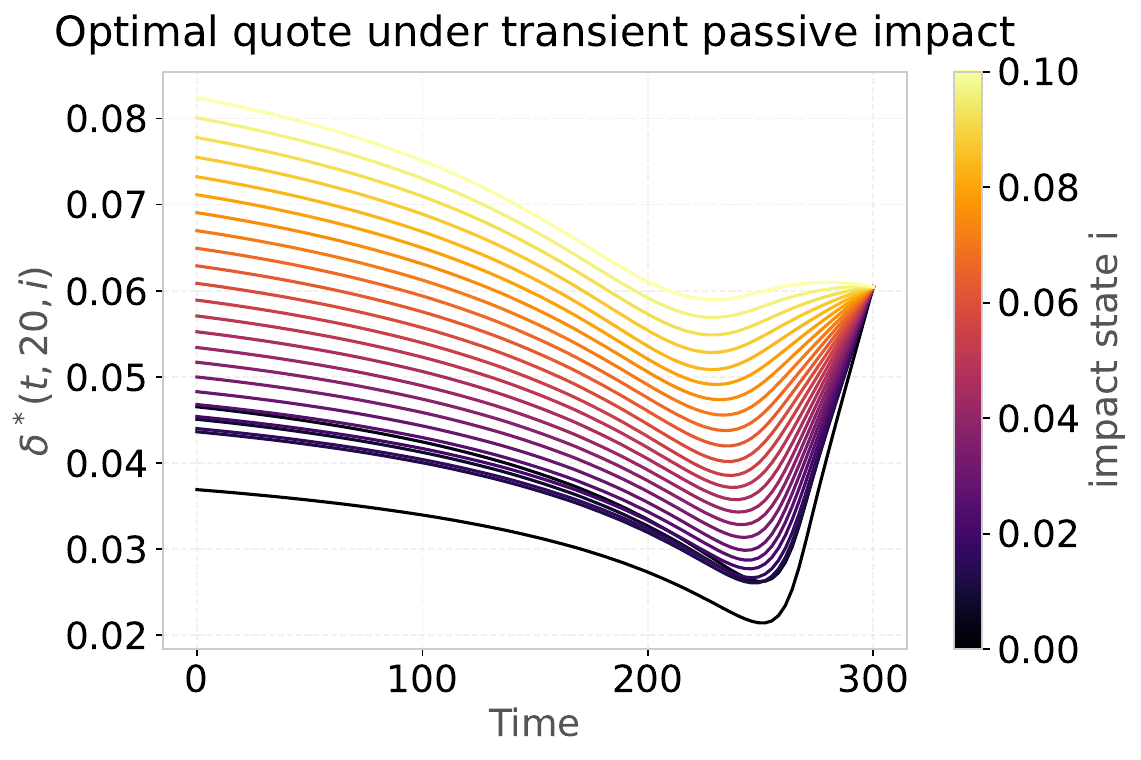}
    \caption{Optimal quote depth $\delta^{\star}(t, 20,i)$ over time for different impact states $i$ with $\rho=0.01$.}
    \label{fig:quotes_transient}
\end{figure}

Figure \ref{fig:quotes_transient} compares the optimal quote depth $\delta^{\star}(t, 20, i)$ across different values of current midprice dislocation due to price impact $i$, with lighter colours indicating a more impacted midprice. The inventory is fixed at $q=20$. We see that when there is more impact, the trader optimally quotes further from the midprice to allow the impact to dissipate, rather than continuing to trade at an adverse price. As the end of the trading period approaches, this strategy of waiting becomes less feasible and hence the difference in quotes due to $i$ narrows.

Similar to Table \ref{tab:eta_summary}, we also show summary statistics of $1000$ Monte Carlo simulations in Table \ref{tab:rho_summary} as $\rho$ is varied and the other parameters equal to the values given in Table \ref{tab:params}. Also similar to the effect of increasing $\eta$ in Table \ref{tab:eta_summary}, when $\rho$ is lower and thus impact decays more slowly, the trader achieves worse P\&L. However, because faster impact decay allows the price to recover between trades, unsold inventory is marked at a more favourable terminal price which reduces the trader's urgency. This results in higher final inventory and slower trading times on average. Note that the $\rho=0$ case corresponds to permanent price impact.

\begin{table}[H]
\begin{center}
\caption{Summary statistics by $\rho$.}\label{tab:rho_summary}
\vskip-0.2cm
\small
\setlength{\tabcolsep}{6pt}
\renewcommand{\arraystretch}{1.15}
\setlength{\dashlinedash}{0.4pt}
\setlength{\dashlinegap}{2pt}
\arrayrulecolor{gray!60}
\makebox[\textwidth][c]{%
\begin{tabular}{l:c:c:c:c}
\toprule
\textbf{$\rho$}
& \begin{tabular}{@{}c@{}}\textbf{Final inventory}\\ \textbf{(thousand shares)}\end{tabular}
& \begin{tabular}{@{}c@{}}\textbf{P\&L}\\ \textbf{(thousand \$)}\end{tabular}
& \begin{tabular}{@{}c@{}}\textbf{Implementation shortfall}\\ \textbf{(thousand \$)}\end{tabular}
& \begin{tabular}{@{}c@{}}\textbf{Trading time}\\ \textbf{(s)}\end{tabular} \\
\midrule
\arrayrulecolor{gray!60}

$\mathbf{0.0}$
& \begin{tabular}{@{}c@{}}$0.059$\\$(0.044,\ 0.074)$\end{tabular}
& \begin{tabular}{@{}c@{}}$0.717$\\$(0.688,\ 0.747)$\end{tabular}
& \begin{tabular}{@{}c@{}}$-0.036$\\$(-0.038,\ -0.035)$\end{tabular}
& \begin{tabular}{@{}c@{}}$273.812$\\$(272.312,\ 275.312)$\end{tabular}
\\
\hdashline

$\mathbf{0.01}$
& \begin{tabular}{@{}c@{}}$0.104$\\$(0.084,\ 0.124)$\end{tabular}
& \begin{tabular}{@{}c@{}}$0.981$\\$(0.949,\ 1.013)$\end{tabular}
& \begin{tabular}{@{}c@{}}$-0.049$\\$(-0.051,\ -0.048)$\end{tabular}
& \begin{tabular}{@{}c@{}}$282.984$\\$(281.884,\ 284.083)$\end{tabular}
\\
\hdashline

$\mathbf{0.02}$
& \begin{tabular}{@{}c@{}}$0.094$\\$(0.075,\ 0.113)$\end{tabular}
& \begin{tabular}{@{}c@{}}$1.047$\\$(1.016,\ 1.079)$\end{tabular}
& \begin{tabular}{@{}c@{}}$-0.053$\\$(-0.054,\ -0.051)$\end{tabular}
& \begin{tabular}{@{}c@{}}$280.880$\\$(279.676,\ 282.084)$\end{tabular}
\\

\arrayrulecolor{black}
\bottomrule
\end{tabular}%
}
\end{center}
\end{table}

\subsection{Target execution schedules}\label{sec:target}

In the baseline model, the trader's inventory risk is given by the running penalty
\[
    \phi\int_t^T Q_u^2\,du.
\]
This corresponds to penalising deviations from the zero-inventory target. A natural extension is to replace this by a deterministic target trajectory (see \cite{Cartea2015Algorithmic}, Section 8.5). Let \(f:[0,T]\to\mathbb R\) be a bounded deterministic function and consider the objective
\begin{equation}\label{eq:target-objective}
     u_f(t,x,q,s)
    :=
    \sup_{\delta \in \mathcal A}
    \mathbb E_{t,x,q,s}
    \bigg[
        X_T + Q_T S_T
        - \phi \int_t^T \big(Q_u-f(u)\big)^2\,du
        - \alpha Q_T^{2}
    \bigg].
\end{equation}
The case \(f\equiv0\) recovers the baseline objective \eqref{eq:optimisation-problem}. A TWAP target can be represented by the decreasing deterministic trajectory 
\begin{equation} \label{eq:TWAP}
    f(t)=q_0\bigg(1-\frac{t}{T}\bigg),
\end{equation}
penalising deviations from a linear liquidation schedule. The HJB equation \eqref{eq:hjb} then becomes
\begin{equation}
\label{eq:hjb-target-full}
\begin{aligned}
    0
    =
    &\frac{\partial u_f}{\partial t}(t,x,q,s)
    + \frac{1}{2}\sigma^2 \frac{\partial^{2} u_f}{\partial s^{2}}(t,x,q,s)
    - \phi \big(q-f(t)\big)^2 \\
    &+ \mathbf 1_{\{q>0\}}
    \sup_{\delta}
    \bigg\{
        \lambda e^{-k\delta}
        \bigg(
             u_f(t,x+s+\delta,q-1,s) - u_f(t,x,q,s)
            - \frac{\eta}{\lambda} \frac{\partial u_f}{\partial s}(t,x,q,s)
        \bigg)
    \bigg\},
\end{aligned}
\end{equation}
with terminal condition
\[
    u_f(T,x,q,s)=x+qs-\alpha q^2.
\]
Notice that, if \(f\) is not identically zero, then \(u_f(t,x,0,s)\neq x\). Even after full liquidation, the trader may still incur a penalty for deviating from the prescribed target trajectory.

\begin{theorem}\label{thm:target}
Let \(m=k\) and \(f:[0,T]\to\mathbb R\) be bounded and deterministic. For \(q=0,\ldots,q_0\), define
\[
    a_q(t):=k\phi\big(q-f(t)\big)^2,
    \qquad
    b_q:=e^{-1}\lambda e^{-k\frac{\eta}{\lambda}q}.
\]
Let \(\omega_f(t,q)\) be defined recursively by
\begin{equation}
\label{eq:omega-target-recursive}
\begin{aligned}
    \omega_f(t,q)
    =
    \exp\bigg(
        -\int_t^T a_q(r)\,dr
    \bigg)
    e^{-k\alpha q^2}
    +
    \mathbf 1_{\{q>0\}}
    b_q
    \int_t^T
        \exp\bigg(
            -\int_t^r a_q(v)\,dv
        \bigg)
        \omega_f(r,q-1)\,dr .
\end{aligned}
\end{equation}
Then, for \(q=1,\dots,q_0\), the optimal quote is
\begin{equation}
\label{eq:optimal-quote-target}
    \delta_f^\star(t,q)
    =
    \frac1k
    +
    \frac1k
    \log\bigg(
        \frac{\omega_f(t,q)}{\omega_f(t,q-1)}
    \bigg)
    +
    \frac{\eta}{\lambda}q .
\end{equation}
\end{theorem}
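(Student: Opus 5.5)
The plan is to follow the proof of Theorem~\ref{thm:main} and only track where the time-dependent penalty $a_q(t)$ and the missing boundary condition at $q=0$ enter. We make the ansatz
\[
u_f(t,x,q,s)=x+qs+\tfrac1k\log\omega_f(t,q),\qquad q=0,\dots,q_0 ,
\]
so that $\partial_s u_f=q$, $\partial_{ss}u_f=0$, and
\[
u_f(t,x+s+\delta,q-1,s)-u_f(t,x,q,s)=\delta+\tfrac1k\log\frac{\omega_f(t,q-1)}{\omega_f(t,q)} .
\]
The terminal condition $u_f(T,x,q,s)=x+qs-\alpha q^2$ then forces $\omega_f(T,q)=e^{-k\alpha q^2}$ for every $q$, including $q=0$. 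This is consistent with \eqref{eq:omega-target-recursive} evaluated at $t=T$.

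For $q\ge1$, the quantity inside the supremum in \eqref{eq:hjb-target-full} becomes
\[
\lambda e^{-k\delta}\bigl(\delta-c_q\bigr),\qquad c_q:=\tfrac1k\log\frac{\omega_f(t,q)}{\omega_f(t,q-1)}+\tfrac{\eta}{\lambda}q .
\]
The map $\delta\mapsto e^{-k\delta}(\delta-c)$ is strictly concave near its critical point and has a unique maximiser at $\delta=c+1/k$. This is exactly \eqref{eq:optimal-quote-target}. The maximal value is
\[
\tfrac{\lambda}{k}e^{-1}e^{-kc_q}=\tfrac1k\, b_q\,\frac{\omega_f(t,q-1)}{\omega_f(t,q)} ,
\]
where $b_q=e^{-1}\lambda e^{-k\eta q/\lambda}$.

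Substituting this into the HJB equation and multiplying by $k\,\omega_f(t,q)$ gives the linear triangular system
\[
\partial_t\omega_f(t,q)=a_q(t)\,\omega_f(t,q)-\mathbf 1_{\{q>0\}}\,b_q\,\omega_f(t,q-1).
\]
For $q=0$ the indicator removes the supremum and we are left with $\partial_t\omega_f(t,0)=a_0(t)\omega_f(t,0)$. Its solution is $\omega_f(t,0)=\exp(-\int_t^T a_0)$, which matches \eqref{eq:omega-target-recursive} at $q=0$ and explains why $u_f(t,x,0,s)\neq x$. For $q\ge1$ we solve by variation of constants on $[t,T]$ with the integrating factor $\exp(-\int_t^r a_q)$, and this produces exactly \eqref{eq:omega-target-recursive}.

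Because $f$ is bounded, each $a_q$ is bounded, and measurability of $f$ (implicit in the definition) is enough for the integrals to make sense. Induction on $q$ then shows that each $\omega_f(\cdot,q)$ is well defined, absolutely continuous, and strictly positive: the first term is positive and the integral term is nonnegative. Positivity is what makes the logarithm, and hence $\delta_f^\star$, well defined.

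The main obstacle is the verification step, namely showing that this classical solution of the HJB equation is the value function \eqref{eq:target-objective} and that $\delta_f^\star$ is optimal. We would argue exactly as for Theorem~\ref{thm:main}. First, apply It\^o's formula for jump processes to $u_f(r,X_r,Q_r,S_r)$ under an arbitrary $\delta\in\mathcal A$. The lower bound on admissible $\delta$ keeps the intensities $\lambda e^{-k\delta}$ and $\eta e^{-k\delta}$ bounded, so the Brownian and compensated-jump local martingales are true martingales. The HJB inequality then gives the upper bound, and equality holds for $\delta_f^\star$.

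The one extra technical point is that $u_f$ is only absolutely continuous in $t$ when $f$ is merely bounded, so the HJB equation holds only for a.e.\ $t$. This does not affect It\^o's formula, because $u_f$ is absolutely continuous in $t$ and affine in $(x,s)$. Finally, $\delta_f^\star$ is bounded below, since $\omega_f$ is continuous and positive on the compact $[0,T]$, so it is admissible.
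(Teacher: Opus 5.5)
Your proposal is correct and follows the paper's proof essentially step for step. Both use the ansatz $u_f=x+qs+\theta_f$ with $\omega_f=e^{k\theta_f}$, the first-order condition $\delta=c_q+1/k$, the triangular linear system $\partial_t\omega_f=a_q\omega_f-\mathbf 1_{\{q>0\}}b_q\omega_f(\cdot,q-1)$ with $\omega_f(T,q)=e^{-k\alpha q^2}$, and variation of constants; the paper does not carry out your additional points (positivity of $\omega_f$, the explicit $q=0$ solution, and the verification sketch), and its proof of Theorem~\ref{thm:main} contains no verification argument, so that step is your own addition rather than something reused from the paper.
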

The target trajectory \(f\) therefore changes the value function through the time-dependent diagonal terms \(a_q(t)=k\phi(q-f(t))^2\), but it does not change the optimal quote formula except through the ratio \(\omega_f(t,q)/\omega_f(t,q-1)\). For a general deterministic target \(f\), the constant matrix exponential in Theorem~\ref{thm:main} is replaced by the recursive representation \eqref{eq:omega-target-recursive}. If \(f\equiv0\), then \(a_q(t)=k\phi q^2\) is constant, \(\omega_f(t,0)=1\), and \eqref{eq:omega-target-recursive} reduces to the baseline recursion \eqref{eq:omega-recursive-solution}.

\begin{figure}[htb]
    \centering
    \includegraphics[width=0.75\textwidth]{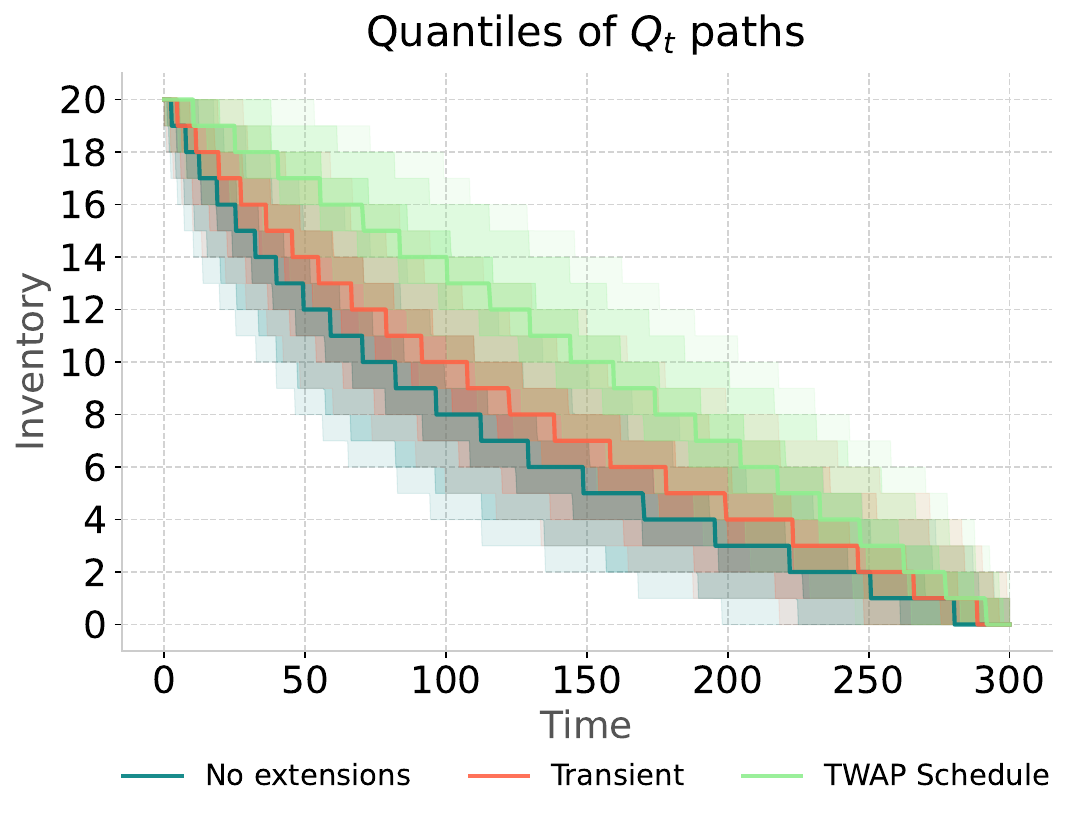}
    \caption{The median path of the inventory $Q_{t}$ in the baseline model with permanent impact (blue), the transient impact extension (red), and the TWAP-target execution schedule model with permanent impact (green) with 25-75\%, 5-95\%, and 1-99\% quantiles shaded in progressively lighter shades.}
    \label{fig:inventory_quantilemap}
\end{figure}

Figure \ref{fig:inventory_quantilemap} shows the median path of inventory $Q$, along with a heatmap of trajectories for the baseline model with no extensions, the model with transient impact, and the model with a TWAP target execution schedule. Indeed, we note that in all of our models, the execution schedule is stochastic due to the random fills, in contrast to classical optimal execution literature based on market orders. 

Convexity of the execution schedule is a key result of \citet{Almgren2001Optimal}. We see that when price impact is transient, the optimal execution schedule is closer to a TWAP, which would correspond to a straight line, than in the case of permanent impact. This is because transient impact creates an incentive to space out orders and allow impact to decay. By contrast, under permanent impact, the trader prefers to reduce inventory risk by executing more quickly, since the total impact cost for a given number of executed orders is unchanged by the timing of those orders. The target execution schedule associated with the TWAP target \eqref{eq:TWAP} is therefore much closer to a linear schedule, as expected.

We can also observe in Figure \ref{fig:inventory_quantilemap} that, for the baseline model with no extensions, the median trajectory is qualitatively close to classical deterministic liquidation profiles, which consider only market orders. However, the reasoning in both cases is different. In the market-order execution case, convexity is mainly driven by the running inventory penalty term (see \cite{Almgren2001Optimal}), while a large temporary impact coefficient reduces convexity, and permanent impact factors out of the objective functional. In the limit-order case, convexity is driven by the running inventory term and permanent passive impact.

\section{Concluding Remarks} \label{sec-rem}

This paper develops a mesoscopic framework for optimal execution with passive orders. Its main objective is to bridge two levels of description that are typically treated separately. At the microstructural level, limit-order submissions, cancellations, and executions contribute to order-flow imbalance and may consequently affect prices. At the execution level, however, passive trading is commonly represented solely through a fill probability and the spread earned conditional on execution \cite{Avellaneda2008High, Cartea2015Algorithmic, Gueant2012Optimal}. The present model introduces an intermediate description in which the detailed sequence of order submissions, cancellations, and repostings is replaced by an average passive impact rate controlled by the quote distance. This formulation preserves a clear microstructural interpretation while remaining analytically tractable.

The main conceptual implication is that passive execution can move mesoscopic prices, which in turn affects the optimal execution strategy. A limit order may provide liquidity locally, but the tactical process through which a larger order is executed can generate persistent directional pressure. Incorporating this pressure changes the optimal balance between spread capture, execution probability and inventory risk. In particular, permanent passive impact affects the optimal quoting policy and the shape of the liquidation trajectory. 

After the initial preparation of this manuscript, a revised version of the contemporaneous work of \citet{ouazzani2024theory} appeared. Their order-book framework shows how liquidity-state-dependent information in market-order flow can generate non-trivial impact curves for passive metaorders. This result reinforces the premise of the present paper that passive execution is not impact-free. The two approaches are complementary: their theory provides a microscopic, queue-based foundation, whereas our work develops a mesoscopic optimal-execution formulation in which quote distance controls both fill intensity and passive-impact accumulation. This level of description is particularly natural for fragmented or quote-driven markets such as FX.

There are several directions for further work. First, the empirical results support the ingredients of the reduced-form model rather than providing a causal estimate of the impact generated by an individual trader's passive strategy. The MLOFI coefficients measure contemporaneous price responses to aggregate order-book events, while the FX fill intensities are inferred from public L2 quote and trade data using hypothetical passive quotes. A stronger empirical test would require trader-tagged or order-level data linking submissions, cancellations, fills, and subsequent price responses.

Second, the exponential specifications are intentionally parsimonious. Fill and impact parameters are likely to depend on the prevailing spread, queue position, displayed depth, volatility, imbalance, order size, intraday seasonality, and recent order flow. Allowing $\lambda$, $k$, $\eta$, and $m$ to depend on these state variables would yield a more realistic control problem but would also remove much of the current analytical structure. The transient formulation likewise raises an empirical question concerning the appropriate choice of decay kernel.

Finally, a practical execution algorithm rarely relies exclusively on one-unit passive orders. Natural extensions include the joint choice of quote distance and displayed size, simultaneous passive and aggressive execution, multiple venues with heterogeneous fill and impact characteristics, and the incorporation of explicit adverse-selection signals. At the portfolio level, passive self-impact should interact with cross-impact across related instruments. These directions would move the framework closer to a full execution architecture in which passive and aggressive actions are selected jointly. The present model, however, retains the advantage of significant tractability, allowing for the derivation of optimal execution policies.

\section{Proof of Theorem \ref{thm:main}}\label{sec:proof-main}

This section is devoted to the proof of the main theorem.
\begin{proof}[Proof of Theorem~\ref{thm:main}]
Recall the HJB equation \eqref{eq:hjb}. We use the ansatz
\begin{equation}\label{eq:ansatz}
     u(t,x,q,s) = x + qs + \theta(t,q).
\end{equation}
Then \eqref{eq:hjb} reduces to
\begin{equation}
\label{eq:hjb-theta-exponential}
    \frac{\partial \theta}{\partial t}(t,q)
    - \phi q^2
    +
    \mathbf 1_{\{q>0\}}
    \sup_{\delta}
    \lambda e^{-k\delta}
    \bigg(
        \delta + \theta(t,q-1) - \theta(t,q) - \frac{\eta}{\lambda} q
    \bigg)
    = 0,
\end{equation}
with boundary and terminal conditions
\begin{equation}\label{eq:boundary-terminal-theta}
    \theta(t,0)=0,
    \qquad
    \theta(T,q) = -\alpha q^{2}.
\end{equation}
The one-step optimisation problem is therefore
\begin{equation}
\label{eq:hamiltonian-exponential-def}
    H\bigg(\theta(t,q)-\theta(t,q-1)+\frac{\eta}{\lambda} q\bigg)
    :=
    \sup_{\delta}
    \bigg\{
        \lambda e^{-k\delta}
        \bigg(
            \delta-\theta(t,q)+\theta(t,q-1)-\frac{\eta}{\lambda} q
        \bigg)
    \bigg\}.
\end{equation}
The first-order condition associated with \eqref{eq:hamiltonian-exponential-def} yields
\begin{equation}
\label{eq:delta-star-theta}
    \delta^\star(t,q)
    =
    \frac{1}{k}
    +
    \theta(t,q)-\theta(t,q-1)+\frac{\eta}{\lambda} q.
\end{equation}
Substituting \eqref{eq:delta-star-theta} into \eqref{eq:hjb-theta-exponential}, we obtain
\begin{equation}
\label{eq:hjb-theta-final}
    \frac{\partial \theta}{\partial t}(t,q)
    - \phi q^2
    +
    \mathbf 1_{\{q>0\}}
    \frac{e^{-1}\lambda}{k}
    \exp\bigg(
        -k\bigg(
            \theta(t,q)-\theta(t,q-1)+\frac{\eta}{\lambda} q
        \bigg)
    \bigg)
    = 0.
\end{equation}
This nonlinear equation is linearised by the transform
\be\label{eq:omega}
    \omega(t,q):=e^{k\theta(t,q)}.
\ee
Indeed, combining \eqref{eq:boundary-terminal-theta}, \eqref{eq:hjb-theta-final}, and \eqref{eq:omega} gives the triangular linear system
\begin{equation}
\label{eq:omega-linear-system}
    \frac{\partial \omega}{\partial t}(t,q)
    =
    k \phi q^2 \omega(t,q)
    -
    e^{-1}\lambda e^{-k\frac{\eta}{\lambda} q}\omega(t,q-1),
\end{equation}
with boundary and terminal conditions
\begin{equation}\label{eq:boundary-terminal-omega}
    \omega(t,0)=1,
    \qquad
    \omega(T,q)=e^{-k\alpha q^2}.
\end{equation}
Solving \eqref{eq:omega-linear-system}--\eqref{eq:boundary-terminal-omega} backward from \(T\)
gives the matrix representation \eqref{eq:omega-matrix-representation}--\eqref{eq:omega-matrix-A}. Finally, using
\eqref{eq:delta-star-theta} together with \eqref{eq:omega}, we obtain
\begin{equation}
    \delta^\star(t,q)
    =
    \frac{1}{k}
    +
    \frac{1}{k}
    \log\bigg(
        \frac{\omega(t,q)}{\omega(t,q-1)}
    \bigg)
    +
    \frac{\eta}{\lambda} q,
\end{equation}
which is exactly \eqref{eq:optimal-quote-omega}. This proves the claim.
\end{proof}

\section{Proofs of Theorems \ref{thm:decay}, \ref{thm:transient}, and \ref{thm:target}} \label{sec:proofs-extensions}

This section focuses on the proofs of our results on the extension of the model.
\begin{proof}[Proof of Theorem~\ref{thm:decay}]
As in the proof of Theorem~\ref{thm:main}, we use the ansatz
\[
     u(t,x,q,s)=x+qs+\theta(t,q).
\]
Then, writing
\be\label{eq:Deltaq}
    \Delta_q(t):=\theta(t,q)-\theta(t,q-1),
\ee
and 
\be\label{eq:Hq}
    H_q(\Delta_q)
    :=
        \lambda e^{-k\delta}(\delta-\Delta_q)
        -
        \eta q e^{-m\delta},
\ee
the HJB equation \eqref{eq:hjb-diffdecay} reduces to
\be\label{eq:hjb-diffdecay-reduced}
    \frac{\partial\theta}{\partial t}(t,q)-\phi q^2
    +
    \mathbf 1_{\{q>0\}}
    \sup_{\delta}
    H_q(\Delta_q)
    =
    0.
\ee
with boundary and terminal conditions as in \eqref{eq:boundary-terminal-diffdecay}. The corresponding first-order condition for a maximiser of \eqref{eq:Hq} is given by
\[
    \lambda e^{-k\delta}
    \big(
        1-k(\delta-\Delta_q)
    \big)
    +
    \eta q m e^{-m\delta}
    =
    0
\]
which, by dividing by $\lambda e^{-k\delta}$, is equivalent to
\be\label{eq:foc}
    k(\delta-\Delta_q)-1
    =
    \frac{\eta}{\lambda} q m e^{-(m-k)\delta}.
\ee
To solve \eqref{eq:foc} for $\delta$, set
\[
    y:=k(\delta-\Delta_q)-1
\]
which is equivalent to
\be\label{eq:deltay}
    \delta
    =
    \Delta_q+\frac{1+y}{k}.
\ee
Substituting \eqref{eq:deltay} into \eqref{eq:foc} gives
\be\label{eq:yeq}
    y
    =
    \frac{\eta}{\lambda} q m
    \exp\bigg(
        -(m-k)\bigg(\Delta_q+\frac{1+y}{k}\bigg)
    \bigg).
\ee
Multiplying \eqref{eq:yeq} first by \(\exp\big(\frac{m-k}{k}y\big)\) and then by \((m-k)/k\), we obtain
\be\label{eq:lambertW-eq}
    \frac{m-k}{k}y
    \exp\bigg(\frac{m-k}{k}y\bigg)
    =
    \frac{m-k}{k}\frac{\eta}{\lambda} q m
    \exp\bigg(
        -\frac{m-k}{k}(1+k\Delta_q)
    \bigg).
\ee
Recall that \(W_0\) denotes the principal real branch of the Lambert \(W\) function, defined by
\begin{equation}
    W_0(z)e^{W_0(z)}=z,
    \qquad z\in[-e^{-1},\infty)
\end{equation}
with \(W_0(z)\geq -1\). On the interval \(z\in[-e^{-1},0)\), the second real branch is denoted by \(W_{-1}\), and satisfies \(W_{-1}(z)\leq -1\).

\textbf{Case 1.} Suppose that \(m>k\). Then \eqref{eq:lambertW-eq} is positive and can be solved in terms of the principal branch \(W_0\) of the Lambert \(W\) function from \eqref{eq:LambertW}. Namely,
\be\label{eq:solution-y}
    y
    =
    \frac{k}{m-k}
    W_0\bigg(
        \frac{m-k}{k}\frac{\eta}{\lambda} q m
        \exp\bigg(
            -\frac{m-k}{k}(1+k\Delta_q)
        \bigg)
    \bigg).
\ee
By \eqref{eq:deltay} and \eqref{eq:solution-y}, the optimiser is therefore
\begin{equation}
\label{eq:delta-star-different-exponents}
    \delta_q^\star(\Delta_q)
    =
    \Delta_q+\frac1k
    +
    \frac{1}{m-k}
    W_0\bigg(
        \frac{m-k}{k}\frac{\eta}{\lambda} q m
        \exp\bigg(
            -\frac{m-k}{k}(1+k\Delta_q)
        \bigg)
    \bigg)
\end{equation}
which, by \eqref{eq:Phi-q-def} and \eqref{eq:Deltaq}, is exactly \eqref{eq:optimal-quote-diffdecay}. Moreover, evaluating \eqref{eq:Hq} at the optimiser gives
\be
\begin{aligned}\label{eq:Hq-opt}
H_q(\delta_q^\star,\Delta_q)
&=
\lambda e^{-k\delta_q^\star}(\delta_q^\star-\Delta_q)
-
\frac{\eta}{\lambda} q\lambda e^{-m\delta_q^\star}.
\end{aligned}
\ee
Rearranging the first-order condition \eqref{eq:foc} into
\be\label{eq:foc-rearranged}
    \frac{\eta}{\lambda} q e^{-m\delta_q^\star}
    =
    \frac{1}{m}e^{-k\delta_q^\star}
    \big(
        k(\delta_q^\star-\Delta_q)-1
    \big),
\ee
we obtain from \eqref{eq:Hq-opt} and \eqref{eq:foc-rearranged}
\be
\begin{aligned}
H_q(\delta_q^\star,\Delta_q)
&=
\lambda e^{-k\delta_q^\star}
\bigg(
    \delta_q^\star-\Delta_q
    -
    \frac{1}{m}
    \bigg(
        k(\delta_q^\star-\Delta_q)-1
    \bigg)
\bigg).
\end{aligned}
\ee
Using \eqref{eq:Phi-q-def}, \eqref{eq:Deltaq}, and \eqref{eq:delta-star-different-exponents}, this simplifies to
\be\label{eq:Hq-opt-simplified}
H_q(\delta_q^\star,\Delta_q)
=
\lambda e^{-k\delta_q^\star}
\bigg(
    \frac1k+\frac{1}{m}\Phi_q(\Delta_q)
\bigg).
\ee
Finally, using \eqref{eq:Phi-q-def}, \eqref{eq:Deltaq}, and \eqref{eq:delta-star-different-exponents} again,
\be\label{eq:aux}
e^{-k\delta_q^\star}
=
\exp\bigg(
    -1
    -k\Delta_q
    -
    \frac{k}{m-k}\Phi_q(\Delta_q)
\bigg).
\ee
Thus, by \eqref{eq:Hq-opt-simplified} and \eqref{eq:aux},
\[
\sup_\delta H_q(\Delta_q)
=
\lambda
\bigg(
    \frac1k+\frac{1}{m}\Phi_q(\Delta_q)
\bigg)
\exp\bigg(
    -1
    -k\Delta_q
    -
    \frac{k}{m-k}\Phi_q(\Delta_q)
\bigg),
\]
and substituting this into \eqref{eq:hjb-diffdecay-reduced} gives \eqref{eq:theta-diffdecay}. This proves the claim.

\textbf{Case 2.} Suppose now that \(m<k\). Then the argument of the Lambert
\(W\) function in \eqref{eq:lambertW-eq} is negative. For \(m\) sufficiently
close to \(k\), this argument belongs to \((-e^{-1},0)\), and hence both real
branches \(W_0\) and \(W_{-1}\) are available. For $m$ sufficiently close to zero the argument also belongs to \((-e^{-1},0)\). Thus \eqref{eq:lambertW-eq} has two real solutions, given by
\be\label{eq:lambertW-solutions}
    y_\ell
    =
    \frac{k}{m-k}
    W_\ell\bigg(
        \frac{m-k}{k}\frac{\eta}{\lambda} q m
        \exp\bigg(
            -\frac{m-k}{k}(1+k\Delta_q)
        \bigg)
    \bigg),
    \qquad \ell\in\{0,-1\}.
\ee
Correspondingly, by \eqref{eq:deltay}, the two critical points are
\be\label{eq:critical-points}
    \delta_\ell
    =
    \Delta_q+\frac1k
    +
    \frac{1}{m-k}
    W_\ell\bigg(
        \frac{m-k}{k}\frac{\eta}{\lambda} q m
        \exp\bigg(
            -\frac{m-k}{k}(1+k\Delta_q)
        \bigg)
    \bigg),
    \qquad \ell\in\{0,-1\}.
\ee
We now determine which branch gives the maximiser. Differentiating
\eqref{eq:Hq} twice gives
\be\label{eq:Hq''}
    \frac{\partial^2 H_q}{\partial \delta^2}(\delta)
    =
    \lambda e^{-k\delta}
    \big(
        k^2(\delta-\Delta_q)-2k
    \big)
    -
    \eta q m^2 e^{-m\delta}.
\ee
At a critical point, the first-order condition \eqref{eq:foc} gives
\be\label{eq:foc-rearranged2}
    \frac{\eta}{\lambda} q m e^{-m\delta}
    =
    e^{-k\delta}
    \big(
        k(\delta-\Delta_q)-1
    \big).
\ee
Hence, at a critical point, by \eqref{eq:Hq''} and \eqref{eq:foc-rearranged2},
\[
\begin{aligned}
    \frac{\partial^2 H_q}{\partial \delta^2}(\delta)
    &=
    \lambda e^{-k\delta}
    \Big(
        k^2(\delta-\Delta_q)-2k
        -
        m\big(k(\delta-\Delta_q)-1\big)
    \Big) \\
    &=
    \lambda e^{-k\delta}
    \Big(
        (k-m)\big(k(\delta-\Delta_q)-1\big)-m
    \Big).
\end{aligned}
\]
Using \(y=k(\delta-\Delta_q)-1\), this becomes
\be\label{eq:Hq''-simplified}
    \frac{\partial^2 H_q}{\partial \delta^2}(\delta)
    =
    \lambda e^{-k\delta}
    \big(
        (k-m)y-m
    \big).
\ee
Plugging \eqref{eq:lambertW-solutions} into \eqref{eq:Hq''-simplified}, we obtain
\be\label{eq:Hq''-lambert}
    \frac{\partial^2 H_q}{\partial \delta^2}(\delta_\ell)
    =
    -\lambda e^{-k\delta_\ell}
    \big(
        k W_\ell(\cdots)+m
    \big),
    \qquad \ell\in\{0,-1\}.
\ee
For \(\ell=0\), we have \(W_0(\cdots)\in(-1,0)\). Moreover, as
\(m\uparrow k\), the argument of \(W_0\) tends to zero and therefore
\(W_0(\cdots)\to0\). Hence, by \eqref{eq:Hq''-lambert}, for \(m<k\) sufficiently close to \(k\), 
\[
    \frac{\partial^2 H_q}{\partial \delta^2}(\delta_0)<0,
\]
so the \(W_0\)-branch gives a local maximum. For \(\ell=-1\), we have \(W_{-1}(\cdots)\leq -1\). Since \(m<k\), by \eqref{eq:Hq''-lambert},
\[
    \frac{\partial^2 H_q}{\partial \delta^2}(\delta_{-1})>0,
\]
so the \(W_{-1}\)-branch gives a local minimum.

It remains to compare the local maximum with the boundary as
\(\delta\to\infty\). Since \(m<k\), the second exponential decays more slowly
than the first one, and therefore
\[
    H_q(\delta;\Delta_q)
    =
    \lambda e^{-k\delta}(\delta-\Delta_q)
    -
    \eta q e^{-m\delta}
    \longrightarrow 0,
    \qquad
    \text{as } \delta\to\infty.
\]
More precisely, the convergence is from below for large \(\delta\). On the
other hand, the value of the Hamiltonian at the \(W_0\)-critical point depends
continuously on \(m\), and as \(m\uparrow k\) it converges to the maximised
Hamiltonian in the equal-decay case \(m=k\), which is strictly positive.
Consequently, for \(m<k\) sufficiently close to \(k\),
\[
    H_q(\delta_0;\Delta_q)>0.
\]
Similarly, by \eqref{eq:Hq},
as \(\delta\to-\infty\), both terms tend to \(-\infty\), and hence
\[
    H_q(\delta;\Delta_q)\to-\infty,\qquad\text{as } \delta\to-\infty.
\]
Thus the \(W_0\)-critical point is a local maximum that dominates the left and right
boundary values, while the \(W_{-1}\)-critical point is a local minimum. Hence, the global maximiser is given in terms of the principal branch \(W_0\), and by \eqref{eq:critical-points} the optimal quote is given by \eqref{eq:optimal-quote-diffdecay}.

\end{proof}

\begin{proof}[Proof of Theorem~\ref{thm:transient}]
Similar to the proof of Theorem \ref{thm:main}, we use the ansatz
\[
    u(t,x,q,s,i)=x+qs+\psi(t,q,i).
\]
This reduces the HJB equation \eqref{eq:hjb-transient} to 
\begin{equation}
\label{eq:hjb-transient-psi}
\begin{aligned}
    0
    =
    &\frac{\partial\psi}{\partial t}(t,q,I)
    +
    \rho I\bigg(q-\frac{\partial\psi}{\partial i}(t,q,I)\bigg)
    -
    \phi q^2\\
    &+
    \mathbf 1_{\{q>0\}}
    \sup_{\delta}\bigg\{
    \lambda e^{-k\delta}
    \bigg(
        \delta
        -
        \bigg(
            \psi(t,q,I)-\psi(t,q-1,I) + \frac{\eta}{\lambda}q
            -
            \frac{\eta}{\lambda}\frac{\partial\psi}{\partial i}(t,q,I)
        \bigg)
    \bigg)\bigg\}.
\end{aligned}
\end{equation}
The maximiser is given by \eqref{eq:optimal-quote-transient}, which substituted into \eqref{eq:hjb-transient-psi} reduces to \eqref{eq:hjb-transient-optimised}.
\end{proof}

\begin{proof}[Proof of Theorem~\ref{thm:target}]
We use the same ansatz as in the baseline model,
\[
    u_f(t,x,q,s)=x+qs+\theta_f(t,q).
\]
Substituting this into \eqref{eq:hjb-target-full} gives
\begin{equation}
\label{eq:hjb-target-theta}
    \frac{\partial \theta_f}{\partial t}(t,q)
    -
    \phi\big(q-f(t)\big)^2
    +
    \mathbf 1_{\{q>0\}}
    \sup_{\delta}
    \lambda e^{-k\delta}
    \bigg(
        \delta
        -
        \theta_f(t,q)
        +
        \theta_f(t,q-1)
        -
        \frac{\eta}{\lambda}q
    \bigg)
    =
    0,
\end{equation}
with terminal condition
\[
    \theta_f(T,q)=-\alpha q^2.
\]
The one-step optimisation is identical to the baseline case, except that the running penalty has changed. Hence the first-order condition gives
\be\label{eq:opt-aux}
    \delta_f^\star(t,q)
    =
    \frac1k
    +
    \theta_f(t,q)-\theta_f(t,q-1)
    +
    \frac{\eta}{\lambda}q .
\ee
Substituting this optimiser into \eqref{eq:hjb-target-theta} yields
\begin{equation}
\label{eq:hjb-target-theta-optimised}
    \frac{\partial \theta_f}{\partial t}(t,q)
    -
    \phi\big(q-f(t)\big)^2
    +
    \mathbf 1_{\{q>0\}}
    \frac{e^{-1}\lambda}{k}
    \exp\bigg(
        -k\bigg(
            \theta_f(t,q)-\theta_f(t,q-1)
            +
            \frac{\eta}{\lambda}q
        \bigg)
    \bigg)
    =
    0 .
\end{equation}
Now define
\[
    \omega_f(t,q):=e^{k\theta_f(t,q)}.
\]
Then \eqref{eq:hjb-target-theta-optimised} becomes the triangular linear system
\begin{equation}
\label{eq:omega-target-linear-system}
    \frac{\partial \omega_f}{\partial t}(t,q)
    =
    k\phi\big(q-f(t)\big)^2\omega_f(t,q)
    -
    \mathbf 1_{\{q>0\}}
    e^{-1}\lambda e^{-k\frac{\eta}{\lambda}q}
    \omega_f(t,q-1),
\end{equation}
with terminal condition
\[
    \omega_f(T,q)=e^{-k\alpha q^2}.
\]
Solving \eqref{eq:omega-target-linear-system} backwards from \(T\) gives precisely the recursive representation \eqref{eq:omega-target-recursive}. Finally, using \(\theta_f=(1/k)\log\omega_f\) in the optimiser \eqref{eq:opt-aux} gives \eqref{eq:optimal-quote-target}. This proves the result.
\end{proof}

\section*{Acknowledgement}
The authors are grateful to Richard Anthony (HSBC) for support throughout the project. The views expressed are those of the authors and do not necessarily reflect the views or practices at HSBC.

\bibliographystyle{abbrvnat}
\bibliography{references}

\end{document}